\documentclass[12pt]{article}

\usepackage{amsmath}
\usepackage{amsthm}
\usepackage{graphicx}
\usepackage{enumerate}
\usepackage{natbib}
\usepackage{url}
\usepackage{newtxtext}
\usepackage[subscriptcorrection]{newtxmath}
\usepackage[below]{placeins}
\usepackage{algorithm}
\usepackage{algorithmicx}       
\usepackage{algpseudocode}
\usepackage{bm}
\usepackage{xcolor}
\usepackage{subcaption}

\newtheorem{theorem}{Theorem}
\newtheorem{proposition}{Proposition}
\newtheorem{lemma}{Lemma}

\theoremstyle{remark}
\newtheorem{remark}{Remark}
\newtheorem{condition}{Condition}

\graphicspath{{./art/}}

\newcommand{\bx}{\bm x}

\newcommand{\bbeta}{\bm \beta}

\newcommand{\btheta}{\bm \theta}

\newcommand{\EE}{\mathbb E}

\newcommand{\MM}{\mathcal M}

\DeclareMathOperator{\var}{var}
\DeclareMathOperator{\cov}{cov}
\DeclareMathOperator{\logit}{logit}

\newcommand{\blind}{1}

\begin{document}

\def\spacingset#1{\renewcommand{\baselinestretch}{#1}\small\normalsize}
\spacingset{1}

\if1\blind
{
  \title{\bf Design-Based Cross-Validation for Comparing Small Area Estimators}
  \author{Qianyu Dong\thanks{
    We are grateful to Jon Wakefield, Sho Kawano, Paul Parker, 
Ameer Dharamshi, Geir-Arne Fuglstad, and members of the Space Time Analysis Bayes
(STAB) working group at the University of Washington for discussion and feedback on
the paper. We also acknowledge the DHS for access and
use of the data. The authors are supported by the National Institutes of Health [R01HD112421], and in part by the Gates Foundation. The findings and conclusions contained within are those of the authors and do not necessarily reflect positions or policies of the Gates Foundation.}\hspace{.2cm}\\
    Department of Statistics, University of California, Santa Cruz\\
    and \\
    Zehang Richard Li \\
    Department of Statistics, University of California, Santa Cruz}
  \maketitle
} \fi

\if0\blind
{
  \bigskip
  \bigskip
  \bigskip
  \begin{center}
    {\LARGE\bf Design-Based Cross-Validation for Comparing Small Area Estimators}

\end{center}
  \medskip
} \fi

\bigskip
\begin{abstract}
Subnational monitoring of public health and development indicators often relies on household surveys where data are sparse at the desired spatial resolution. Small area estimation (SAE) methods address this challenge by borrowing strength across areas and incorporating auxiliary information. However, comparing these estimators remains difficult in the absence of ground truth. We propose a design-based cross-validation framework for evaluating small area estimators that accommodates complex survey designs. Our approach enables model-agnostic comparisons between area-level and unit-level SAE models. We derive a decomposition of the conditional mean squared error that yields a consistent cross-validation score, show that finite-sample comparisons carry an unidentifiable bias that can be bounded, and use this bound as a principled threshold for ranking models.  We further show that leave-one-area-out cross-validation, a popular alternative, targets extrapolation rather than smoothing error and can reverse the correct ranking.  We evaluate the framework through extensive design-based simulations. We apply the framework to compare subnational female literacy estimators in Zambia using the 2024 Demographic and Health Survey. The framework applies broadly across prevalence mapping and other SAE problems and is applicable to any small area estimator irrespective of the underlying model class.
\end{abstract}

\noindent%
{\it Keywords:} survey sampling; small area estimation; spatial smoothing; model checking; cross-validation.

\vfill

\newpage
\spacingset{1.9} 

\section{Introduction}\label{sec:intro}
Small area estimation (SAE) refers to the process of estimating quantities of interest for specific geographic areas or subpopulations using survey data.  When within-area sample sizes are too small for reliable design-based direct estimation, model‐based methods borrow information across areas and incorporate auxiliary information
to improve the precision of estimates \citep{rao:molina:15}. 
In most low- and middle-income countries (LMICs), census or administrative microdata are limited. Routinely collected household surveys serve as the only reliable sources for many key health and demographic indicators. Demographic and Health Surveys (DHS) \citep{dhs_program, corsi2012dhs} and the Multiple Indicator Cluster Surveys (MICS) \citep{unicef_mics, khan2019mics} are two prominent examples that have been implemented in many countries over several decades. Small area estimation methods have been developed and applied in this setting for outcomes including poverty, education, child mortality, fertility, vaccination, and HIV \citep[see, e.g., ][]{chi2022microestimates,mercer:etal:15, li:etal:19, saha2023small,dong:wakefield:21,wu2026small}. 

SAE models are generally categorized into either area-level or unit-level models. The most widely used area-level model is the Fay-Herriot model \citep{fay:herriot:79}, where the direct estimates and their associated estimates of the sampling variances are used as response data in a smoothing model. In contrast, unit-level approaches model individual survey responses, often via generalized linear mixed models. 
Empirical comparisons reveal no universally dominant approach: performance depends on many factors, such as the survey design, spatial dependence structure, and data sparsity \citep{wakefield2025twocultures, dong2026toward}.
Model evaluation and selection therefore remains a fundamental challenge in SAE. Existing approaches are either restricted to comparing models within shared likelihood families or require external benchmark data or assumptions about the true data-generating mechanism. There is a critical need for a unified, model-agnostic framework for comparing candidate estimators on a common and scientifically meaningful scale.

In this paper, we develop a design-based cross-validation framework to address this challenge of model comparison in SAE, motivated by the problem of mapping subnational prevalence of health indicators using complex household surveys. The framework evaluates small area estimators against held-out, design-based direct estimates constructed from sample splits that respect the survey design, enabling model-agnostic comparison across area-level and unit-level approaches. We derive a decomposition of conditional mean squared errors for small area estimators that reveals a consistent cross-validation score. We show that while finite-sample score differences contain an unidentifiable bias component, this component can be rigorously bounded, yielding a principled threshold for ranking competing models. We also show that popular alternatives, such as leave-one-area-out cross-validation, produce systematic bias in model selection. 

The proposed framework is evaluated on the task of estimating the proportion of women aged 15--49 who are literate in Zambia using the 2024 Zambia Demographic and Health Survey. Female literacy is a key indicator of human capital and gender equity, and a direct target under Sustainable Development Goal 4.6, which calls for universal literacy by 2030 \citep{sdgsWeb}. Despite national-level progress, subnational disparities in female literacy remain pronounced across sub-Saharan Africa \citep{baten2021educational}. Mapping such inequalities at fine geographic resolution is of direct policy relevance, since educational programming and resource allocation are typically planned at the province or district level. Prior work on subnational literacy estimation has explored both high-resolution spatial geostatistical methods \citep{bosco2017} and area-level SAE models \citep{wu2026small}, yet a principled basis for selecting among competing modeling approaches has been lacking.

The rest of the paper is organized as follows. 
Section \ref{sec:data} introduces the 2024 Zambia DHS, the target estimand, and the candidate SAE models under comparison.
Section \ref{sec:background} reviews current approaches to evaluating small area estimators in the literature and discusses their limitations.
Section \ref{sec:method} develops the proposed scoring and cross-validation framework. 
Section \ref{sec:sim} reports simulation results and Section \ref{sec:real-data} applies the methodology to compare estimators of female literacy rate in Zambia at both the province and district level. Section \ref{sec:discuss} concludes with a discussion of future directions. All proofs are deferred to the Supplementary Materials, which also include code and data to replicate our findings.

\section{Motivating Example: the 2024 Zambia DHS}\label{sec:data}
\subsection{Survey Design}\label{sec:survey-design}
Similar to most DHS surveys, the 2024 Zambia DHS adopts the stratified multi-stage cluster design, where the primary sampling units (PSUs) are enumeration areas (EAs), or clusters, and within each PSU, the secondary sampling units (SSUs) are households. The survey is based on a sampling frame from the 2022 Census of Population and Housing of the Republic of Zambia. The population comprises $36,770$ clusters. Zambia has $10$ provinces, which are further divided into $115$ districts.
In the first stage, each of the provinces was stratified by urban and rural residence, yielding $20$ strata in total. Within each stratum, clusters were selected using probability proportional to size (PPS) sampling, resulting in $545$ clusters in total. In the second stage, within each selected cluster, households were selected through equal-probability systematic sampling, with a fixed take of $25$ households per cluster, yielding a target sample of $13,625$ households. 
For eligible women aged 15--49 in the sampled households, the DHS collects information on a broad range of demographic, reproductive, maternal, and child health indicators, along with related socioeconomic and behavioral characteristics. In this study, we consider estimating the proportion of women who are literate, where literacy is defined as women who attended schooling higher than the secondary level or who can read a whole sentence or part of a sentence.

\subsection{Direct Estimation}\label{sec:direct}
For a finite population with $N$ units and $M$ non-overlapping domains, let $U_i \subset \{1, 2, \ldots, N\}$, $i = 1, \ldots, M$, denote the index set of units in the $i$-th domain. In this paper, we consider domains defined by administrative areas. 
Let $y_j \in \{0, 1\}$ denote the response value from the $j$-th unit and $s[j] \in \{1, ..., M\}$ denote which subpopulation the $j$-th unit belongs to. We focus on binary outcomes here, though the method extends to other variable types. The primary quantity of interest is the area-level prevalence $\btheta = (\theta_1, ..., \theta_M)$.

From the realized survey sample $S \subset U_1 \cup \cdots \cup U_M$, with their design weights $w_j = 1 / \pi_j$, where $\pi_j$ is the probability that the $j$-th unit is sampled, the H{\'a}jek estimator \citep{hajek:71} for subpopulation prevalence is given by
\begin{equation}
\hat \theta_i^{w} = \frac{\sum_{j \in S\cap U_i}w_j y_j}{\sum_{j \in S\cap U_i}w_j}, \;\;\;\; i = 1, ..., M.
\end{equation} 

\subsection{Area- and Unit-Level SAE Models}\label{sec:candidate-models-intro}
We consider two widely adopted model classes for estimating subnational prevalence. At the area level, we consider Fay-Herriot models with
\begin{eqnarray}
\hat \phi_i^{w} \mid \theta_i \sim N(\logit(\theta_i), \hat V_i), \;\;\;\;
\logit(\theta_i) = \alpha + \bx_i^T\bbeta + u_i, \;\;\;\;i = 1, ..., M, \label{eq:FH}
\end{eqnarray}
where $\hat \phi_i^{w} = \logit(\hat \theta_i^{w})$ is the transformed direct estimate, and $\hat V_i$ is the associated asymptotic variance estimate, $\bbeta$ are fixed effects for covariates $\bx_i$, and $u_i$ are random effects.

At the unit level, a Bernoulli likelihood for the binary survey responses is equivalent to a cluster-level binomial model. For a survey consisting of $C$ clusters, we let $c[j]$ be the cluster index of the $j$-th individual and $i[c]$ be the area index of the $c$-th cluster. Let $Y_{c}$ and $n_c$ denote the number of positive responses and the number of sampled units from the $c$-th cluster, respectively, i.e., $Y_{c} = \sum_{j: c[j] = c} y_{j}$ and $n_{c} = \sum_{j: c[j] = c} 1$. We consider a cluster-level model with a beta-binomial likelihood \citep{dong:wakefield:21, dong2026toward,wakefield2025twocultures},
\begin{align}
Y_c \mid p_c &\sim \mbox{BetaBinomial}(n_c, p_{c}, d) .
\end{align}
In this formulation, $p_c$ is the latent prevalence for the $c$-th cluster, and $d$ is the overdispersion parameter that captures the additional within-cluster variation. 
The cluster-level prevalence is then linked to area-level prevalence via
$
\logit(p_{c}) = \alpha + \bx_i^T\bbeta +u_{i[c]},
$
with random effects $u_i$. For both model classes, we take the point estimate $\hat{\theta}_i$ to be the posterior mean of $\theta_i$. 

Figure~\ref{fig:literacymap} shows a comparison of different estimators at both province and district levels, with detailed specification described in Section \ref{sec:sim-adm1}. While both SAE models target the same quantity of interest $\btheta$, the modeling assumptions and likelihood structures are completely different, making likelihood-based model comparison infeasible. 

\begin{figure}[!htb]
    \centering
    \includegraphics[width=1 \textwidth]{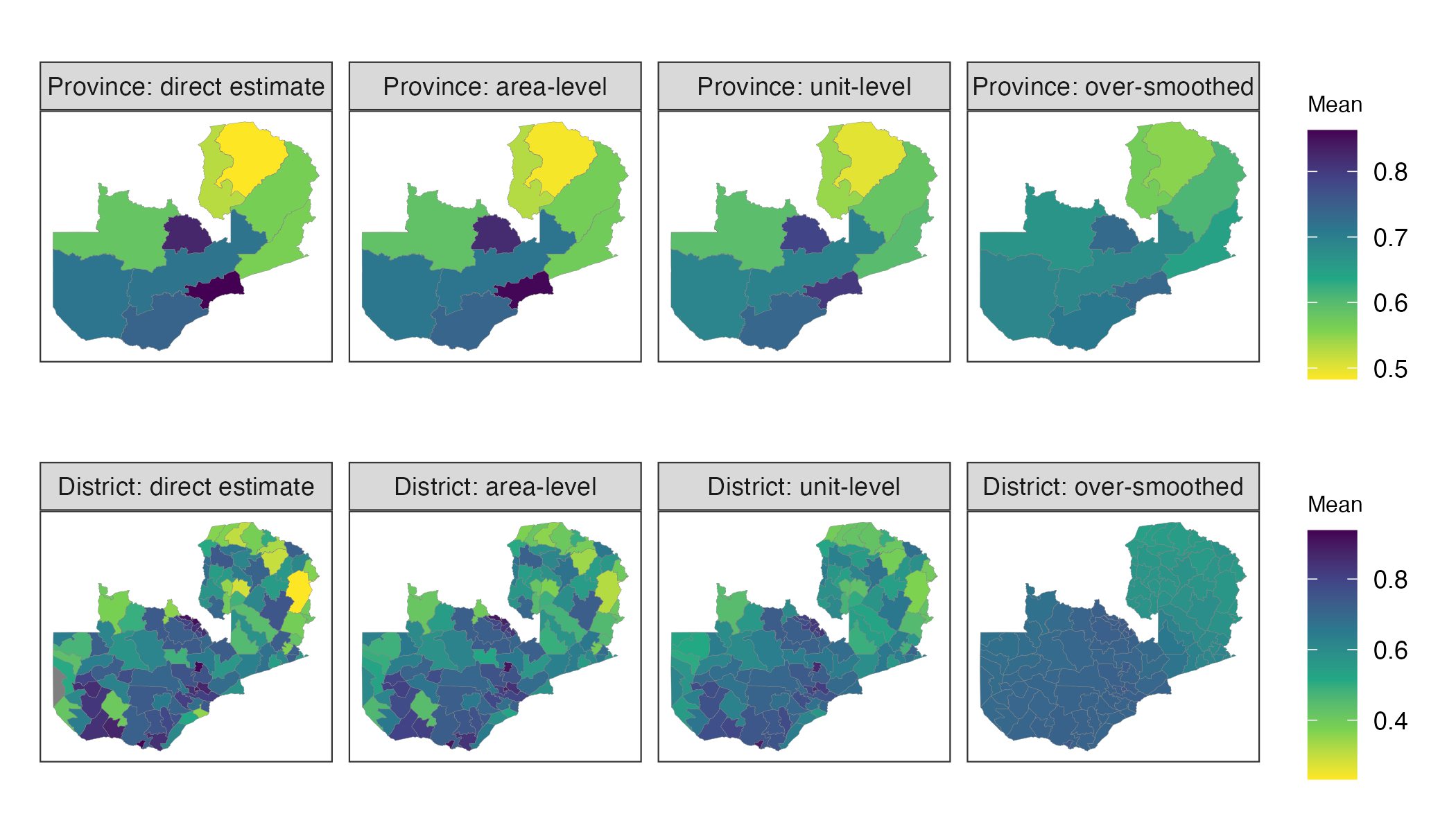}
    \caption{Province- and district-level estimates of the female literacy rate
    under direct estimation and the three candidate SAE models considered in Section \ref{sec:real-data}. 
    }  
    \label{fig:literacymap}
\end{figure}

\section{Current Approaches to Evaluating SAE Models} \label{sec:background}
Existing model evaluation approaches fall broadly into three categories: benchmark-based, likelihood-based, and empirical simulation. None of the approaches provides a generally applicable, model-agnostic framework for comparing estimators. The first class of methods involves evaluating candidate estimators in terms of their agreement with some benchmark values. External validation against census data is widely regarded as the gold standard for model assessment \citep{merfeld2024small, merfeld2022combining}, and when a census is not available, validation against estimates from independent surveys has been proposed \citep{brown2001evaluation, dorfman2018towards}. However, external census data or surveys are rarely available at the required spatial resolution. 
Alternatively, one may first aggregate SAE estimates to coarser levels, e.g., a single national estimate, and then compare them with external benchmark values. 
While useful for falsifying estimators, aggregated evaluation changes the target estimand completely. Estimators that perform well at the aggregated level can still be significantly biased at the finer levels, especially with over-smoothed models. 
When no external data are available, direct estimates from the same sample are also sometimes used to assess small area estimators \citep{lahiri2019evaluation}. However, such evaluation suffers from double-dipping and systematically favors models that track the noise in direct estimates. 

 



Traditional model assessment and validation tools, including goodness-of-fit tests \citep{rao1981analysis, rao1984chi, thomas1987small, lumley2014tests, lumley2017fitting}, information criteria \citep{fabrizi2013design, lumley2015aic}, and generalization error measures \citep{holbrook2020estimating}, have been developed for regression models that incorporate survey weights, with adjustments to account for complex sampling designs. Similar approaches for comparing Bayesian SAE models using measures such as DIC and WAIC have also been adopted in the literature \citep[see, e.g., ][]{trevisani2017comparison,gomez2008comparison,franco2023combining}.
These methods are generally limited to comparing models that share a comparable likelihood formulation. Consequently, they cannot be used to compare models with fundamentally different constructions, e.g., between area-level and unit-level models.

Finally, empirical simulation approaches are also commonly used in the SAE literature \citep{bradley2015multivariate, janicki2022bayesian}. These approaches perturb observed direct estimates by adding noise, and then evaluate models fitted to the synthetic direct estimates against the original estimates. When some individual-level microdata from the population exist, design-based simulation may also be carried out by generating repeated surveys based on a synthetic population dataset \citep{alfons2010simulation, wieczorek2013small}. Both approaches evaluate model performance for a hypothetical population rather than the specific observed dataset at hand, thus implicitly changing the target of inference.

The cross-validation framework developed in Section~\ref{sec:method} retains the model-agnostic advantage of benchmark-based approaches while resolving the lack of benchmark datasets. It is closely related to the concurrent work of \citet{kawano2026dt}, in which model evaluation approaches are developed by splitting direct estimates into independent copies for model fitting and evaluation, using data thinning \citep{neufeld2024data, dharamshi2025generalized}. \citet{kawano2026dt} is restricted to comparing area-level models with a Gaussian likelihood that enables data thinning. By sample splitting at the level of individual survey responses rather than area-level summaries, the 
framework in this paper enables comparison across models with  
different likelihood structures and extends to any estimators, including those constructed with machine learning or other black-box algorithms.




\section{Cross-Validating Small Area Estimators}\label{sec:method}
\subsection{Setup and Notation}
Building on the notation introduced in Section \ref{sec:direct}, we denote a generic partition of a sample $S$ into a training set $A$ and a validation set $B = S\setminus A$, and write $\hat\btheta_A$ and $\hat\btheta_B$ for the estimators based on the two subsamples. Specifically for $K$-fold cross-validation, we use $\hat\theta^{(k)}$ and $\hat\theta^{(-k)}$ to denote the estimators based on the $k$-th held-out dataset and the remaining samples, respectively. Lastly, $\mathrm{E}[\cdot]$ denotes the expectation over the survey sampling  distribution of $S$, and $\mathrm{E}[\cdot \mid S]$ denotes the expectation over the resampling distribution of the partitions given the realized sample $S$.

\subsection{Scoring Small Area Estimators} \label{sec:score}
For most SAE applications, a natural metric to score the estimator is the weighted squared error loss,
$
L(\hat \btheta, \btheta)= \sum_{i=1}^M q_i (\hat \theta_i - \theta_i)^2,
$
for some weights $q_i$. The choice of the aggregation weights is contextual. They may represent population proportions by area, or simply be set to $1/M$ to weight all areas equally. Evaluating $\hat\btheta$ based on the loss function instead of model-induced likelihood enables model comparison across different model classes. Similar principles have been adopted by \citet{kuh2024using} in the context of multilevel regression and poststratification.

This squared error loss is closely related to the mean squared error (MSE), a topic extensively studied in the SAE literature \cite[see, e.g., ][]{Prasad:1990, lahiri1995robust,rivest2000conditional,butar2003measures, rao:molina:15}. Specifically, our ultimate target of inference for model $\MM$ is the weighted average of area-specific conditional MSEs,
\begin{equation} \label{eq:mse}
    \mbox{MSE}^{\MM} =  \sum_{i=1}^M q_i \EE[(\hat \theta_{i}^{\MM} - \theta_i)^2],
\end{equation}
where $\btheta$ is considered fixed and unknown, and the expectation is taken over the distribution of sampling units. The term $\EE[(\hat \theta_{i}^{\MM} - \theta_i)^2]$ is usually referred to as the conditional MSE in the literature \citep{rivest2000conditional,lohr2009jackknife}. The majority of the theoretical results on estimating MSE, however, focus on the unconditional case where the expectation is further taken over the distribution of $\btheta$, assuming area-level Gaussian random effects \citep{Prasad:1990, datta2000unified}. To the best of our knowledge, the score proposed in this paper is the first cross-validation-based estimator of the conditional MSE without imposing distributional assumptions on the random effects.



\subsection{Stratified $K$-fold Cross-Validation} \label{sec:scv}
Following the multi-stage design, 
we randomly assign SSUs, i.e., households, within each cluster to one of the $K$ folds. This creates $K$ folds of data, each containing all clusters in the survey but only $1/K$ of the households within each cluster. The survey design is preserved within each fold, and the selection probability only needs to be adjusted by $1/K$, and survey weights remain proportionally unchanged. 
Given a partition, the naive estimator of the squared error loss is
\begin{equation} \label{eq:mse-naive}
\widehat{\mathrm{score}}_{\mathrm{naive}}^{\MM} = \sum_{i=1}^M  \frac{q_i}{K} \sum_{k=1}^{K}
\left(\hat{\theta}_i^{\MM, (-k)}-\hat{\theta}_i^{w,(k)}\right)^2 ,
\end{equation}
where \(\hat{\theta}_i^{\MM, (-k)}\) denotes the model-based estimate of $\theta_i$ obtained using data excluding the \(k\)-th fold, and \(\hat{\theta}_i^{w,(k)}\) denotes the direct estimate for the \(i\)-th area computed from the \(k\)-th held-out fold. 
The partitioning step can also be repeated and averaged to stabilize the estimated score.

Stratified cross-validation under complex survey designs has been investigated in the survey literature, usually for the purpose of variable selection in regression models \citep{iparragirre2023variable, wieczorek2022k}. Our goal is different here. Instead of evaluating individual-level prediction, we compare model-based estimators of population prevalence with the direct estimates from the held-out samples. The design-aware sample splitting is key to approximating the target estimand, which is detailed next.

To understand the properties of CV scores, it is important to first note that MSE estimators based on sample splitting target a different estimand at a reduced sample size \citep{bates2024cross}. Specifically, an MSE estimator based on a subset of samples $A \subset S$ targets $\text{MSE}_{A}^{\mathcal{M}} = \sum_{i=1}^M q_i \mathbb{E}[(\hat\theta_{A,i}^{\mathcal{M}} - \theta_i)^2]$, rather than $\text{MSE}^{\mathcal{M}}$ in Equation (\ref{eq:mse}). 
In $K$-fold cross-validation, set $A$ has approximately $(K-1)/K$ of the size of $S$.
To differentiate the two targets, we refer to $\text{MSE}^{\mathcal{M}}$ as the oracle full-sample MSE and $\text{MSE}_{A}^{\mathcal{M}}$ as the oracle training MSE. A similar gap in the target estimand was also observed in \citet{kawano2026dt}. The relationship between the two estimands, however, is not a simple function of sample size difference, due to both the complex survey design and smoothing models. For the purpose of model comparison, our goal is to estimate model ranking under the original design. It is therefore desirable to keep the training sets close to the full-sample distribution, in order to reduce this inestimable gap. For multi-stage designs, one may choose to stratify at different stages to form the partitions. In the Supplementary Materials, we include one alternative splitting strategy where clusters are assigned to the $K$ folds instead of households , and show that such an approach performs similarly to cross-validation of SSUs when sample size is large, but is more biased when sample size is small, due to a larger gap between the training oracle and full-sample oracle estimand.  


\subsection{An Adjusted CV Score}
We now turn to estimators of $\text{MSE}_{A}$ using sample splitting. The naive CV score in Equation (\ref{eq:mse-naive}) is a biased estimator for $\text{MSE}_{A}$, because the held-out direct estimate carries sampling variation. Theorem \ref{thm1} identifies the bias structure for sample splitting estimators in general.

\begin{theorem}[Estimation bias due to sample splitting]
\label{thm1}
For the $i$-th area, consider a partition of $S$ into two sets $A$ and $B$ and model $\MM$, and assume that the direct estimates $\mathbb{E}[\hat{\theta}_{B,i}^{w}] = \theta_i$. The naive MSE estimator, $\EE[(\hat\theta_{A,i}^{\MM} - \hat\theta^{w}_{B, i})^2 \mid S]$, is biased for the oracle training MSE, with 
\begin{equation}
\EE[\EE[(\hat\theta_{A,i}^{\MM} - \hat\theta^{w}_{B, i})^2 \mid S]] - \EE[(\hat\theta_{A,i}^{\MM} - \theta_i)^2] = 
\EE(v_i + b_i^2 - 2c_i^{\MM} - 2b_id_i^{\MM})
\end{equation}
where the outer expectation is taken over the distribution of $S$, and the bias components are
\begin{align}
    v_i &= \var(\hat\theta_{B,i}^w \mid S),\\
    b_i &= \EE(\hat\theta_{B,i}^w - \theta_i \mid S),\\
    d_i^{\MM} &= \EE(\hat\theta_{A,i}^{\MM}- \theta_i \mid S),\\
    c_i^{\MM} &= \cov(\hat\theta_{A, i}^{\MM}, \hat\theta_{B,i}^w \mid S).
\end{align}
\end{theorem}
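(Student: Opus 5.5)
Fix the sample $S$ and work conditionally on it first, with the resampling distribution of the partition $(A,B)$ as the only source of randomness; at the end, take the outer expectation over $S$. The argument is an add-and-subtract expansion. Write
\[
\hat\theta_{A,i}^{\MM} - \hat\theta^{w}_{B,i} = (\hat\theta_{A,i}^{\MM} - \theta_i) - (\hat\theta^{w}_{B,i} - \theta_i)
\]
and expand the square:
\[
(\hat\theta_{A,i}^{\MM} - \hat\theta^{w}_{B,i})^2 = (\hat\theta_{A,i}^{\MM} - \theta_i)^2 + (\hat\theta^{w}_{B,i} - \theta_i)^2 - 2(\hat\theta_{A,i}^{\MM} - \theta_i)(\hat\theta^{w}_{B,i} - \theta_i).
\]
Taking $\EE[\cdot \mid S]$ leaves three terms, and I identify each with the quantities defined in Theorem \ref{thm1}.

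The second term is handled by the conditional bias--variance identity: $\EE[(\hat\theta^{w}_{B,i} - \theta_i)^2 \mid S] = \var(\hat\theta^w_{B,i}\mid S) + \{\EE(\hat\theta^w_{B,i}-\theta_i\mid S)\}^2 = v_i + b_i^2$. For the cross term, use $\EE[XY\mid S] = \cov(X,Y\mid S) + \EE[X\mid S]\EE[Y\mid S]$ with $X = \hat\theta_{A,i}^{\MM}-\theta_i$ and $Y = \hat\theta^w_{B,i}-\theta_i$. Since $\theta_i$ is a fixed constant, shifting by it does not change the covariance, so the covariance equals $c_i^{\MM}$. The product of the means is $d_i^{\MM} b_i$. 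Hence
\[
\EE[(\hat\theta_{A,i}^{\MM} - \hat\theta^{w}_{B,i})^2 \mid S] = \EE[(\hat\theta_{A,i}^{\MM} - \theta_i)^2\mid S] + v_i + b_i^2 - 2c_i^{\MM} - 2 b_i d_i^{\MM}.
\]

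Now take the outer expectation over $S$. By the tower property, $\EE[\EE[(\hat\theta_{A,i}^{\MM}-\theta_i)^2\mid S]] = \EE[(\hat\theta_{A,i}^{\MM}-\theta_i)^2]$, which is the oracle training MSE contribution for area $i$. Subtracting it gives exactly $\EE(v_i + b_i^2 - 2c_i^{\MM} - 2b_i d_i^{\MM})$, as claimed.

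The assumption $\EE[\hat\theta^w_{B,i}] = \theta_i$ is not needed for the algebra itself. Its role is interpretive: it implies $\EE[b_i] = 0$, so $b_i$ is a mean-zero fluctuation rather than a systematic bias, which will matter when bounding these terms later.

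The only step needing care is the bookkeeping about what is random under each expectation. Conditional on $S$, $\theta_i$ is constant, while $\hat\theta_{A,i}^{\MM}$ and $\hat\theta^w_{B,i}$ are random only through the partition. One must also avoid assuming $A$ and $B$ are independent given $S$: they are complementary and negatively dependent, which is precisely why $c_i^{\MM}$ remains in the expression. I also need the relevant second moments to be finite, which holds trivially for bounded prevalence estimators.
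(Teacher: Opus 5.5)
Your proposal is correct and follows the paper's proof essentially line for line: you add and subtract $\theta_i$, expand the square conditional on $S$, identify the two extra terms as $v_i + b_i^2$ and $c_i^{\MM} + d_i^{\MM} b_i$ via the conditional variance and covariance identities, and take the outer expectation over $S$. Your remark that the assumption $\EE[\hat\theta^w_{B,i}] = \theta_i$ is not needed for the algebra is also accurate; the paper uses it only later, to obtain $\EE(b_i)=0$ in the proof of Theorem~\ref{thm2}.
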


\begin{remark}
\label{remark1}
When the index sets $A$ and $B$ are from two independent surveys, instead of a random partition of the same survey, the bias reduces to $\EE(-v_i)$ and an unbiased estimator exists in the form of
$\EE[(\hat\theta_{A,i}^{\MM} - \hat\theta^{w}_{B, i})^2] - v_i$.
\end{remark}

The conditional biases in Theorem \ref{thm1}, $b_i$ and $d_i^\MM$, are not identifiable without knowing the true $\btheta$. It is important to note that while the direct estimator is design unbiased, splitting a survey into two sets results in the realized error being carried over to all resulting subsamples. Thus $b_i$ is nonzero given a finite sample $S$. This leads to an unidentifiable bias when evaluating the difference of MSEs across models. Theorem \ref{thm2} provides an adjusted score and characterizes the remainder bias term for cross-validation estimators. 

\begin{theorem}[Adjusted CV score]
\label{thm2}
For the $i$-th area, define the fold-averaged means
$\bar{\theta}^w_i = K^{-1}\sum_{k = 1}^K \hat\theta_i^{w,(k)}$
and
$\bar{\theta}^\MM_i = K^{-1}\sum_{k=1}^K \hat\theta_i^{ \MM,(-k)}$,
the empirical variance of the held-out direct estimates
\begin{equation}
    \hat v_i = \frac{1}{K}\sum_{k = 1}^K(\hat\theta^{w,(k)}_i - \bar{\theta}^w_i)^2,
\end{equation}
and the empirical covariance between the held-out direct estimates and the training-set model estimates
\begin{equation}
    \hat c_i^\MM = \frac{1}{K}\sum_{k=1}^K (\hat\theta^{w,(k)}_i - \bar{\theta}^w_i)(\hat\theta^{ \MM,(-k)}_i  - \bar{\theta}^\MM_i).
\end{equation}
The adjusted CV score for model $\MM$ is
\begin{equation}\label{eq:score}
     \widehat{\mathrm{score}}_i^{\MM} =
     \frac{1}{K}\sum_{k=1}^{K} \left(\hat{\theta}_i^{\MM, (-k)}-\hat{\theta}_i^{w,(k)}\right)^2
     -\hat v_i + 2\hat c_i^{\MM}.
\end{equation}
For two models $\MM_1$ and $\MM_2$, the expected score difference satisfies
\begin{equation}\label{eq:errordiff}
\EE(\widehat{\mathrm{score}}_i^{\MM_1} - \widehat{\mathrm{score}}_i^{\MM_2}) =
\EE\big[(\hat\theta_{A,i}^{\MM_1} - \theta_{i})^2 -
(\hat\theta_{A,i}^{\MM_2} - \theta_{ i})^2 \big] + e_i,
\end{equation}
where the expectation is over the sampling distribution of $S$. The remainder term is
\begin{equation}\label{eq:error}
e_i = -2\,\cov(b_i,\; d_i^{\MM_1} - d_i^{\MM_2}).
\end{equation}
where $b_i = \EE(\hat\theta_{i}^{w,(k)} - \theta_i \mid S)$ and $d_i^{\MM} = \EE(\hat\theta_{i}^{ \MM,(-k)}- \theta_i \mid S)$, and the covariance is over the distribution of samples $S$.
\end{theorem}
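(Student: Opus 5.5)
We will reduce Theorem~\ref{thm2} to the bias decomposition of Theorem~\ref{thm1}. Throughout, we treat the $K$ folds as exchangeable draws of a partition $(A,B)$ given $S$: the pair $(\hat\theta_i^{\MM,(-k)}, \hat\theta_i^{w,(k)})$ has, conditional on $S$, the same distribution as $(\hat\theta_{A,i}^{\MM}, \hat\theta_{B,i}^{w})$ for each $k$. In particular, $b_i$ and $d_i^{\MM}$ as defined in Theorem~\ref{thm2} agree with those of Theorem~\ref{thm1}. The plan is to show that, in conditional expectation given $S$, the correction terms $-\hat v_i + 2\hat c_i^{\MM}$ cancel the $v_i - 2c_i^{\MM}$ part of the bias. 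We then show that the remaining $b_i^2 - 2b_i d_i^{\MM}$ part leaves only the covariance term $e_i$ once we difference two models.

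First, the fold-averaged squared error has conditional mean $\EE[(\hat\theta_{A,i}^{\MM} - \hat\theta_{B,i}^w)^2 \mid S]$. By Theorem~\ref{thm1}, its full expectation equals the oracle training MSE plus $\EE(v_i + b_i^2 - 2c_i^{\MM} - 2b_i d_i^{\MM})$.

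Second, we argue that $\EE[\hat v_i \mid S] = v_i$ and $\EE[\hat c_i^{\MM} \mid S] = c_i^{\MM}$. These hold at least in the sense relevant for the identity, up to the usual $(K-1)/K$ factor, which the statement implicitly absorbs. The key observation makes the normalization exact rather than approximate. Because the folds partition $S$, the H\'ajek direct estimates from the folds are, to first order, weighted averages that recombine to the full-sample estimate. Hence $\bar\theta_i^w$ is (approximately) a function of $S$ alone, and the within-$S$ spread of $\hat\theta_i^{w,(k)}$ around $\bar\theta_i^w$ is exactly the conditional variance with the $1/K$ normalization. The analogous centring argument applies to $\hat c_i^{\MM}$. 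If exactness fails, we would state the theorem's identity as holding for the population-normalized versions, i.e.\ define $v_i, c_i^{\MM}$ as the within-partition empirical moments, and proceed.

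Third, combining these steps gives
\[
\EE(\widehat{\mathrm{score}}_i^{\MM}) = \EE[(\hat\theta_{A,i}^{\MM} - \theta_i)^2] + \EE(b_i^2) - 2\EE(b_i d_i^{\MM}).
\]
Differencing over $\MM_1,\MM_2$ cancels $\EE(b_i^2)$, which does not depend on the model, leaving $-2\EE[b_i(d_i^{\MM_1} - d_i^{\MM_2})]$. We then write this as $-2\cov(b_i, d_i^{\MM_1} - d_i^{\MM_2}) - 2\EE(b_i)\EE(d_i^{\MM_1} - d_i^{\MM_2})$. By the tower property and the assumption $\EE[\hat\theta_{B,i}^w] = \theta_i$, we have $\EE(b_i) = \EE[\hat\theta_{B,i}^w] - \theta_i = 0$, so only $e_i$ survives.

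The main obstacle is the second step. Justifying that the empirical fold variance and covariance are conditionally unbiased for $v_i$ and $c_i^{\MM}$ requires care: the folds are not independent given $S$ (they are negatively dependent through the partition), and the H\'ajek ratio is nonlinear. I would handle this by linearizing the fold-level ratio estimators and using the exchangeability of fold assignments. Failing that, I would interpret $v_i$ and $c_i^{\MM}$ as the partition-empirical moments, which makes the cancellation exact by definition.
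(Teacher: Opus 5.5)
Your proposal is correct and follows essentially the same route as the paper. Both arguments invoke Theorem~\ref{thm1}, difference the two models so the model-free terms drop out, and use $\EE(b_i)=0$ to turn $-2\EE[b_i(d_i^{\MM_1}-d_i^{\MM_2})]$ into the covariance $e_i$. Where you go further is the step the paper's proof simply assumes, namely that the correction terms cancel $v_i$ and $c_i^{\MM}$; only $\hat c_i^{\MM}$ actually matters, since $\hat v_i$ is model-free and drops out of the difference. Your centring argument does justify this step. If $\bar\theta^w_i$ is fixed given $S$, fold exchangeability gives $\EE[\hat\theta^{w,(k)}_i\mid S]=\bar\theta^w_i$, and $\sum_k(\hat\theta^{w,(k)}_i-\bar\theta^w_i)=0$, so the $1/K$-normalized $\hat c_i^{\MM}$ is exactly conditionally unbiased for $c_i^{\MM}$ (to first order for the H\'ajek ratio). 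You can therefore drop the hedge about a $(K-1)/K$ factor, and also the fallback of redefining $v_i,c_i^{\MM}$ as empirical moments, which would invalidate the decomposition in Theorem~\ref{thm1}.
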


Of the two correction terms, $v_i$ is a constant across models and $c_i^{\MM}$ is analogous to the covariance penalty in \citet{efron2004estimation}, reflecting the shared survey structures between the training and validation samples. Theorem \ref{thm2} also shows that the residual bias in the score difference is driven by the covariance between the conditional bias of the direct estimate in the validation set, $b_i$, and the model difference in the training set, $d_i^{\MM_1} - d_i^{\MM_2}$. The bias is large when, across hypothetical replicate surveys, the sampling error in the direct estimator for area $i$ is systematically associated with the difference in model biases. This occurs, for instance, when one model tracks the direct estimates more closely than the other.
When there is only a single sample $S$, the bias $e_i$ cannot be identified. Therefore, when two models have similar conditional MSE,
the oracle MSE difference in Equation \eqref{eq:errordiff} is small
and the remainder $e_i$ may dominate the score difference,
making the comparison  uninformative. To account for this, we establish a finite-sample bound for $|e_i|$ in Theorem \ref{thm3}, and show that $e_i \to 0$ as data accumulate in Theorem \ref{thm:consistency}.

\begin{theorem}[Approximate error bound for adjusted CV score difference]
\label{thm3}
For each area $i$, let $n_i = |S \cap U_i|$ denote the within-area
sample size. Under regularity conditions in the Supplementary Materials, $e_i$ can be bounded by 
\begin{equation}
      |e_i|  \leq 2\sqrt{2 \var(\hat\theta_{S, i}^w)\Big(\var(\hat\theta_{S, i}^{\MM_1}) +\var(\hat\theta_{S, i}^{\MM_2})\Big)}+ o(n_i^{-1})
\end{equation}
where the subscript $S$ denotes models fitted on the full sample $S$, and $o(n_i^{-1})$ collects linearization remainders.
Therefore, a plug-in estimator of the bound is
\begin{equation}\label{eq:bound}
  t_i=2\sqrt{2\widehat{\var}(\hat\theta_{S, i}^w)\Big(\widehat{\var}(\hat\theta_{S, i}^{\MM_1}) +\widehat{\var}(\hat\theta_{S, i}^{\MM_2})\Big)},
\end{equation}
where $\widehat{\var}(\hat\theta_{S, i}^w)$ is the standard design-based variance estimator for the direct estimate based on the full sample, e.g., by linearization of the H\'{a}jek estimator. For Bayesian models, the posterior variance 
$\mathrm{var}_{\mathrm{post}}(\theta^{\mathcal{M}}_{S,i})$ provides a 
plug-in estimator for $\widehat{\mathrm{var}}(\hat{\theta}^{\mathcal{M}}_{S,i})$.
\end{theorem}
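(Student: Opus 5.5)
We start from the representation of the remainder in Theorem~\ref{thm2}, $e_i = -2\cov(b_i, d_i^{\MM_1}-d_i^{\MM_2})$, and apply Cauchy--Schwarz over the sampling distribution of $S$:
\[
|e_i| \le 2\sqrt{\var(b_i)\,\var(d_i^{\MM_1}-d_i^{\MM_2})}.
\]
For the second factor we use $\var(X-Y)\le 2\var(X)+2\var(Y)$, so that
\[
\var(d_i^{\MM_1}-d_i^{\MM_2})\le 2\big(\var(d_i^{\MM_1})+\var(d_i^{\MM_2})\big).
\]
This produces the factor $2\sqrt{2\,\cdot}$ in the bound. What remains is to identify $\var(b_i)$ with $\var(\hat\theta^w_{S,i})$ and $\var(d_i^{\MM})$ with $\var(\hat\theta^{\MM}_{S,i})$, each up to $o(n_i^{-1})$.

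For $b_i$, we note that $\theta_i$ is fixed, so $\var(b_i)=\var\big(\EE(\hat\theta^{w,(k)}_i\mid S)\big)$. Since households are split uniformly at random within clusters and the weights are only rescaled by $K$, a linearization of the H\'ajek estimator gives
\[
\hat\theta^{w,(k)}_i = \theta_i + \frac{\sum_{j\in B\cap U_i} w_j (y_j-\theta_i)}{\sum_{j\in B\cap U_i} w_j} + O_p(n_i^{-1}).
\]
Taking the expectation over partitions, the linear numerator has conditional mean proportional to the full-sample linearized numerator $\sum_{j\in S\cap U_i} w_j(y_j-\theta_i)$. The denominator is likewise replaced by its full-sample counterpart, with the difference absorbed into the remainder. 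Hence
\[
\EE(\hat\theta^{w,(k)}_i\mid S)=\hat\theta^w_{S,i}+R_i,
\]
where $R_i$ is a second-order term. The regularity conditions needed here are bounded weights, $n_i\to\infty$ with the fold fraction fixed, and bounded fourth moments. Under these we get $\var(b_i)=\var(\hat\theta^w_{S,i})+o(n_i^{-1})$, because the cross term between the $O_p(n_i^{-1/2})$ leading term and $R_i$ is $o(n_i^{-1})$.

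The analogous step for $d_i^{\MM}=\EE(\hat\theta^{\MM,(-k)}_i\mid S)-\theta_i$ is the main obstacle, because the model estimator is a nonlinear, possibly black-box function of the data. Our first attempt is to impose, as a regularity condition, that the estimator admits an asymptotically linear (influence-function type) expansion in the weighted within-cluster sums. The training set $A$ keeps a fraction $(K-1)/K$ of households in every cluster, so averaging over partitions recovers the full-sample linear term. This yields
\[
\EE(\hat\theta^{\MM,(-k)}_i\mid S)=\hat\theta^{\MM}_{S,i}+o_p(n_i^{-1/2}),
\]
and therefore $\var(d_i^{\MM})=\var(\hat\theta^{\MM}_{S,i})+o(n_i^{-1})$. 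The smoothing and shrinkage from other areas enter only through the fixed functional form and do not break this argument, provided the model's dependence on area $i$'s data is smooth. Substituting both identifications into the Cauchy--Schwarz bound, and using $\sqrt{a+o(n_i^{-1})}\le\sqrt a+o(n_i^{-1/2})$ together with the fact that each variance factor is $O(n_i^{-1})$, the product error is $o(n_i^{-1})$, which gives the stated bound.

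The plug-in bound $t_i$ then follows by replacing each variance with a consistent estimator. For the direct estimate we use the design-based linearization variance. For Bayesian models we use the posterior variance, invoking a Bernstein--von Mises type condition so that the posterior variance matches the sampling variance to first order.
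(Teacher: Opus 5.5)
Your proposal is correct and follows essentially the paper's route: Cauchy--Schwarz on $e_i=-2\cov(b_i,d_i^{\MM_1}-d_i^{\MM_2})$, a linearization plus equal inclusion probabilities to replace the partition-averaged quantities by their full-sample counterparts up to $o(n_i^{-1})$, the inequality $\var(X-Y)\le 2\var(X)+2\var(Y)$, and the design-based and posterior (Bernstein--von Mises) plug-ins. The only difference is the order of steps. The paper linearizes the difference $\hat\theta^{\MM_1}_{A,i}-\hat\theta^{\MM_2}_{A,i}$ directly, which is a weaker condition than your per-model linearization, and applies the variance inequality last, so it keeps the tighter intermediate bound in terms of $\var(\hat\theta^{\MM_1}_{S,i}-\hat\theta^{\MM_2}_{S,i})$. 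On the other hand, your explicit treatment of the $o(n_i^{-1})$ remainder, using that each variance factor is $O(n_i^{-1})$, is slightly more careful than the paper's.
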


For the estimation of the aggregated squared error, the aggregated adjusted CV score is $\widehat{\mbox{score}}_q^{\MM}=\sum_{i=1}^M q_i \widehat{\mbox{score}}_i^{\MM}$, with the aggregated bias
\(e_q = \sum_{i=1}^M q_i e_i\),
satisfying
\begin{equation}\label{eq:agg-bd}
|e_q|
\le
\sum_{i=1}^M q_i |e_i|
\le
\sum_{i=1}^M q_i t_i
\end{equation}
Thus $t_q = \sum_{i=1}^M q_i t_i$ serves as a threshold for deciding whether an observed aggregated score difference is large enough to support a ranking between $\MM_1$ and $\MM_2$. The full procedure for $K$-fold cross-validation of small area estimators is summarized in Algorithm \ref{alg:cv}. 

\begin{algorithm}[htb]
\caption{$K$-fold Stratified cross-validation for comparing small area estimators}\label{alg:cv}
\begin{algorithmic} 
\Require Multi-stage stratified survey sample $S$, candidate models $\MM_1$ and $\MM_2$, aggregation weights $q_1, ..., q_M$.
\begin{enumerate}
    \item Randomly partition SSUs within each cluster into $K$ folds and adjust for survey weights accordingly.
    \item \textbf{for} $k = 1, \ldots, K$ \textbf{do}
     \item \hspace*{1em}  Compute the $k$-th held-out direct estimates $\hat\btheta^{w,(k)}$.
    \item \hspace*{1em}  Fit $\MM_1$ and $\MM_2$ on the remaining samples to obtain $\hat\btheta^{ \MM_1,(-k)}$ and $\hat\btheta^{ \MM_2,(-k)}$. 
 \item Compute adjusted CV scores for $\MM_1$ and $\MM_2$ using Equation (\ref{eq:score}).
 \item Compute the error bounds $t_i$ using Equation (\ref{eq:bound}) and the aggregated bound $t_q = \sum_{i=1}^M q_i t_i$.
 \item  \textbf{If} $|\mbox{score}^{\MM_1} - \mbox{score}^{\MM_2}| > t_q$  \textbf{then} return the model with the smaller score.
 \item \textbf{Else} comparison is inconclusive.
\end{enumerate}
\end{algorithmic}
\end{algorithm}

\begin{remark}
    We focus on $K$-fold cross-validation here, but the MSE decomposition in Theorem \ref{thm1} and the error bound in Theorem \ref{thm3} apply to a broad range of sample-splitting schemes, as long as the two sets $A$ and $B$ are both representative of the full sample $S$. The cross-validation procedure in Algorithm \ref{alg:cv} can also be directly extended to repeated sample splitting with averaged estimators. A special case of such an averaged estimator with $K = 2$ is provided in the Supplementary Materials.
\end{remark}

Finally, the preceding results provide finite-sample tools for model comparison. A natural question is whether the remainder $e_i$ vanishes as data accumulate. Theorem \ref{thm:consistency} shows that the adjusted score is  asymptotically consistent for the conditional MSE difference under mild conditions.

\begin{theorem}[Consistency of the adjusted score]\label{thm:consistency}
With $K$ and $M$ fixed, assume $\var(\hat\theta_{S,i}^w) \to 0$ as $n \to \infty$ and 
$\var(\hat\theta_{S,i}^{\MM})$ is bounded for each model $\MM$, 
then $e_i \to 0$  for each area $i = 1, \ldots, M$.
Consequently,
$
\EE\big(\widehat{\mathrm{score}}_q^{\MM_1} 
      - \widehat{\mathrm{score}}_q^{\MM_2}\big)
- \big(\mathrm{MSE}_A^{\MM_1} - \mathrm{MSE}_A^{\MM_2}\big)
\;\to\; 0.
$
\end{theorem}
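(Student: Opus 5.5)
The argument has two parts: show $e_i\to 0$ for each area, then aggregate using Theorem~\ref{thm2}.

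\textbf{Step 1: reduce $e_i$ to variances.} Start from the representation in Theorem~\ref{thm2}, $e_i=-2\cov(b_i,\,d_i^{\MM_1}-d_i^{\MM_2})$. By Cauchy--Schwarz,
\[
|e_i|\le 2\sqrt{\var(b_i)}\,\sqrt{\var(d_i^{\MM_1}-d_i^{\MM_2})}\le 2\sqrt{\var(b_i)}\sqrt{2\var(d_i^{\MM_1})+2\var(d_i^{\MM_2})}.
\]
This does not need the linearization used in Theorem~\ref{thm3}.

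\textbf{Step 2: relate the conditional quantities to full-sample variances.} Since $\theta_i$ is fixed, $\var(b_i)=\var\{\EE(\hat\theta_i^{w,(k)}\mid S)\}$. I will argue that this is bounded by a constant multiple of $\var(\hat\theta_{S,i}^w)$. The cleanest route is to note that under the household-level splitting of Section~\ref{sec:scv}, the fold-averaged conditional mean $\EE(\hat\theta_i^{w,(k)}\mid S)$ equals $\hat\theta^w_{S,i}$ up to a ratio-estimator remainder of order $o(n_i^{-1})$, using the H\'ajek form and the fact that weights are only rescaled by $K$.

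Alternatively, the law of total variance gives $\var(b_i)\le \var(\hat\theta_i^{w,(k)})$ directly. The single-fold direct estimate is a H\'ajek estimator on a design with a $1/K$ subsample within each cluster. Its variance is at most of order $K\var(\hat\theta_{S,i}^w)$ plus a vanishing term, so it tends to zero whenever $\var(\hat\theta^w_{S,i})\to0$, with $K$ fixed. I would state this comparability as the regularity condition inherited from Theorem~\ref{thm3}.

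For the model terms, the law of total variance gives $\var(d_i^{\MM})\le\var(\hat\theta_i^{\MM,(-k)})$. I then need this to be bounded, using the assumed boundedness of $\var(\hat\theta^{\MM}_{S,i})$ and the same training-versus-full-sample comparability. Even more simply, if the estimators take values in $[0,1]$ for prevalences, the variance is at most $1/4$.

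\textbf{Step 3: conclude.} A quantity tending to zero times a bounded quantity gives $e_i\to0$ for each $i$. With $M$ fixed, $e_q=\sum_i q_ie_i\to0$. Combining this with \eqref{eq:errordiff} summed over $i$ with weights $q_i$ yields
\[
\EE(\widehat{\mathrm{score}}_q^{\MM_1}-\widehat{\mathrm{score}}_q^{\MM_2})-(\mathrm{MSE}_A^{\MM_1}-\mathrm{MSE}_A^{\MM_2})=e_q\to0.
\]

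\textbf{Main obstacle.} The main obstacle is Step 2 for $b_i$: justifying that the variance of the conditional bias of a held-out fold's direct estimate vanishes under the hypothesis stated only for the full-sample $\hat\theta^w_{S,i}$. I would first try the total-variance bound $\var(b_i)\le\var(\hat\theta^{w,(k)}_i)$ together with the fixed-$K$ comparability of fold and full-sample design variances. The bounded prevalence range is the fallback for the model side.
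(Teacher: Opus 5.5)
Your proposal is correct and is essentially the paper's proof: Cauchy--Schwarz on $e_i=-2\cov(b_i,d_i^{\MM_1}-d_i^{\MM_2})$, then $\var(X-Y)\le 2\var(X)+2\var(Y)$, then the law of total variance giving $\var(b_i)\le\var(\hat\theta^w_{B,i})$ and $\var(d_i^{\MM})\le\var(\hat\theta^{\MM}_{A,i})$, with the vanishing of the fold-level direct-estimator variance taken as a design regularity assumption, exactly as the paper does. Your remark that prevalence estimates lie in $[0,1]$, so their variance is at most $1/4$, is a useful addition: it closes a small mismatch in the paper, which assumes $\var(\hat\theta^{\MM}_{S,i})$ is bounded but actually uses boundedness of $\var(\hat\theta^{\MM}_{A,i})$.
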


\subsection{The Failure of Leave-One-Area-Out (LOAO) Cross-Validation}\label{sec:loao}
To illustrate the pitfalls of sample splitting strategies without maintaining the original survey design, we examine a popular validation scheme in the literature using leave-one-area-out (LOAO) cross-validation. Under LOAO, the model is fitted on data from \(M-1\) areas and then used to predict the prevalence in the held-out area. Let \(\hat{\theta}_i^{(-i)}\) denote the predicted prevalence for the $i$-th area using the rest of the data. An analogous score evaluating the squared error loss is 
\[
\widehat{\mathrm{score}}_{\mathrm{LOAO}}=\sum_{i=1}^M q_i\left(\hat{\theta}_i^{(-i)}-\hat{\theta}_i^{w}\right)^2.
\]
LOAO cross-validation has been used heuristically in the SAE literature \citep{steorts2020smoothing, wakefield2025twocultures}. For unit-level models, where individual observations are nested within areas, LOAO is a form of block cross-validation. For area-level models, LOAO corresponds directly to the standard leave-one-out cross-validation (LOO-CV) and can be evaluated using Pareto smoothed importance sampling without refitting the model $M$ times \citep{vehtari2016bayesian}. The computational efficiency makes it a widely used diagnostic tool to assess Bayesian area-level models. 

Although conceptually simple and straightforward to implement, LOAO is an extrapolation estimator targeting a different estimand, $\EE[(\hat\theta_i^{(-i)} - \theta_i)^2]$. Proposition \ref{prop:loao-decomp} shows that it can lead to reversed rankings when comparing smoothing models in typical SAE settings.
\begin{proposition}\label{prop:loao-decomp}
For the $i$-th area, the difference between the extrapolation and smoothing MSE is
\[
\EE[(\hat\theta_i^{(-i)} - \theta_i)^2] - \EE[(\hat\theta_i - \theta_i)^2] = \Big(\var(\hat\theta_i^{(-i)}) - \var(\hat\theta_i)\Big) + \Big((\Delta_i^{(-i)})^2 - (\Delta_i)^2\Big),
\]
where $\Delta_i = \EE(\hat\theta_i- \theta_i)$ and $\Delta_i^{(-i)} = \EE(\hat\theta_i^{(-i)} - \theta_i)$ are the conditional biases of the smoothing and extrapolation estimators, and the expectations are taken over the sampling distribution conditional on $\btheta$.
\end{proposition}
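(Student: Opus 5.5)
The identity is the standard bias--variance decomposition, applied separately to the extrapolation estimator and to the smoothing estimator; subtracting the two gives the result. Throughout, $\btheta$ is held fixed. All expectations and variances are taken over the sampling distribution of $S$ (and hence of any quantity computed from $S$), conditional on $\btheta$, exactly as in the definitions of $\Delta_i$ and $\Delta_i^{(-i)}$. No result from Theorems \ref{thm1}--\ref{thm:consistency} is needed.

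For a generic estimator $T$ of the fixed scalar $\theta_i$ with finite second moment, I would add and subtract $\EE(T)$:
\[
(T-\theta_i)^2 = \big(T-\EE T\big)^2 + 2\big(T-\EE T\big)\big(\EE T-\theta_i\big) + \big(\EE T-\theta_i\big)^2 .
\]
Taking expectations, the cross term vanishes because $\EE T-\theta_i$ is a constant and $\EE(T-\EE T)=0$. This leaves
\[
\EE[(T-\theta_i)^2]=\var(T)+\big(\EE T-\theta_i\big)^2 .
\]

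Next I would apply this with $T=\hat\theta_i^{(-i)}$. This is legitimate because it is a (measurable) function of the data from the other $M-1$ areas and hence a random variable under the same sampling law. It gives $\var(\hat\theta_i^{(-i)})+(\Delta_i^{(-i)})^2$. Applying it again with $T=\hat\theta_i$ gives $\var(\hat\theta_i)+\Delta_i^2$. Subtracting the second expression from the first and grouping the variance terms and the squared-bias terms yields the displayed identity.

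The only point needing care is the conditioning convention, and it is not really an obstacle. $\theta_i$ must be treated as nonrandom, so that the cross term drops. Both MSEs must be taken under the same sampling distribution given $\btheta$, so that the two expansions are comparable. I would also state the implicit assumption that both estimators have finite variance.
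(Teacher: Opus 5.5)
Your proposal is correct and takes the same approach as the paper, which only remarks that the proposition follows from the standard MSE (bias--variance) decomposition applied to each estimator, with the two expansions then subtracted. Your explicit treatment of the vanishing cross term, the fixed $\theta_i$ under the common sampling law, and the finite-variance assumption supplies the details the paper leaves implicit.
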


The proof of Proposition \ref{prop:loao-decomp} follows from standard MSE decomposition. The gap between the two MSE estimands manifests differently across models. It is easy to see that for a strongly over-smoothed model, the gap is close to $0$ as the model estimates are not sensitive to dropping data from a single area. However, for a well-calibrated model, both the estimation variance and bias of the estimator for the $i$-th area are expected to increase after removing all data from the $i$-th area, thus making the extrapolation MSE larger than the smoothing MSE. Therefore, LOAO systematically penalizes flexible models more than over-smoothed models. We numerically illustrate this bias in Section \ref{sec:sim}.

\section{Simulation Analysis} \label{sec:sim}



\subsection{Design-Based Synthetic Survey Simulation} \label{sec:sim-design}
We first consider a design-based simulation study that mimics the 2024 Zambia DHS. Since the 2024 Zambia DHS was based on the sampling frame from the 2022 census, we constructed a 300m $\times$ 300m population grid using the 2022 population surface from WorldPop \citep{worldpop2018}. To construct the sampling frame, we ranked pixels by population within each province and retained the most populated pixels to
match the provincial EA counts from the 2024 DHS sampling frame, yielding $36{,}770$ gridded pixels. 
We treat these pixels as the master sampling frame of our study. For simplicity, we ignore urban/rural status of the pixels in the simulation and consider synthetic surveys  stratified by province. 

We construct the synthetic population in this simulation study using smoothed district-level estimates of women's literacy rate, obtained by fitting a Fay-Herriot model on the 2024 Zambia DHS. This introduces realistic within-area heterogeneity in our simulated data.
For a given cluster $c$ with population $N_c$ in a district with prevalence $p$, we generate event count $Y_c$
from a Beta-Binomial distribution with mean $N_c p$ and variance $N_c p(1-p)(1 + (N_c - 1)d)$, where $d = 0.00001$ introduces additional overdispersion within the district. In this simulated population, at the district level, $N_c$ varies between 12 and 462 with the average around 65, resulting in excess binomial variance between $0.011\%$ and $0.461\%$. The simulated finite population prevalence in the $i$-th area is then $\theta_{i} = \sum_{c \in i} Y_c / \sum_{c \in i} N_c$. 

\subsection{Province-Level Modeling}\label{sec:sim-adm1}
We start with modeling prevalence at the $10$ province level. Given the synthetic population, we draw 50 replicate surveys using a two-stage stratified design, where $50$ clusters are sampled in each stratum and $30$ households are sampled in each cluster. We consider the following three candidate models that differ in their model class and prior specification. 

\begin{itemize} 
\item $\MM_1$: Unit-level Beta-Binomial model with independent random effects $u_i \stackrel{iid}{\sim} N(0, \sigma_u^2)$, and a $PC (1, 0.01)$ hyperprior on $\sigma_u$.
\item $\MM_2$: Area-level Fay-Herriot model with the same priors for random effects as in $\mathcal{M}_1$.
\item $\MM_3$: Same model as in $\MM_1$, but with a highly informative $PC (0.01, 0.01)$ on $\sigma_u$.
\end{itemize} 
For all three models, we adopt the penalised complexity (PC) prior \citep{simpson:etal:17}, where the $PC(U, \alpha)$ prior for the standard deviation $\sigma$ specifies that $Pr(\sigma >U) = \alpha$. For both $\MM_1$ and $\MM_2$, the prior correspond to default priors commonly used in practice \citep{dong2026toward}. 
Figure~\ref{fig:sim_int} displays, for a single replicate, point estimates and 90\% uncertainty intervals produced by the candidate models. Direct estimation, $\MM_1$, and $\MM_2$ all produce estimates close to the population truth across all provinces. By contrast, $\MM_3$ shows excessive shrinkage, though with overlapping interval estimates relative to $\MM_1$ and $\MM_2$.

\begin{figure}[htb]
    \centering
    \includegraphics[width=1\textwidth]{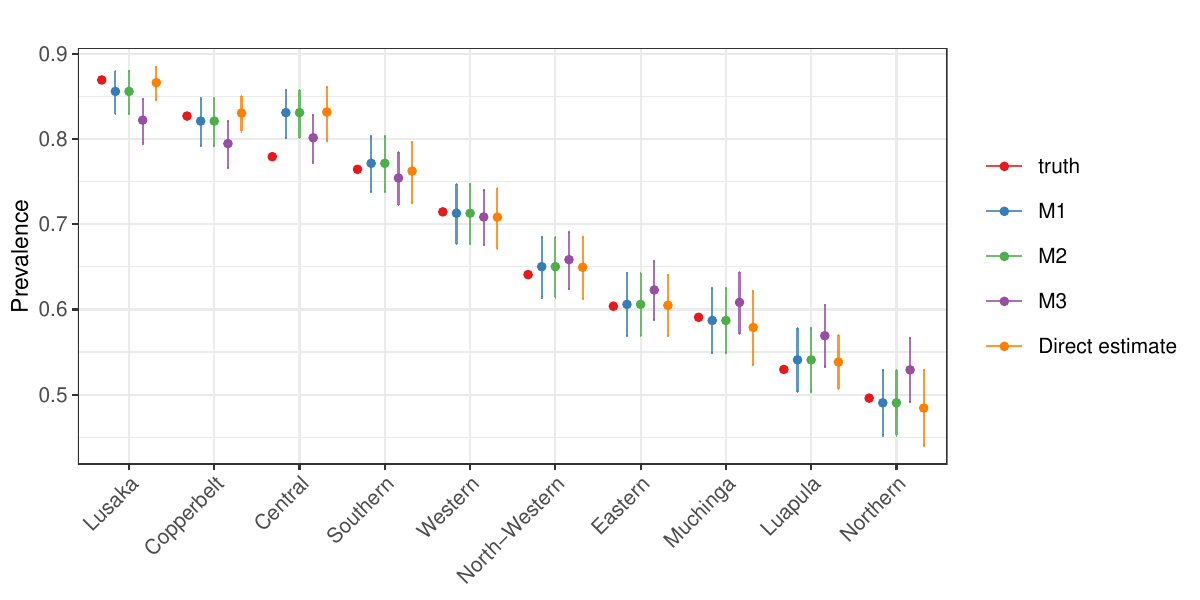}
\caption{Province-level prevalence estimates and 90\% credible intervals with the synthetic population truth for a single simulated survey. 
}  \label{fig:sim_int}
\end{figure}



 We computed the adjusted cross-validation scores of all models using $5$-fold stratified cross-validation. Figure~\ref{fig:bd_diff_nntl} summarizes the adjusted CV score comparisons of the three models across the $50$ synthetic surveys. When comparing $\MM_3$ with $\MM_1$, the adjusted CV score consistently identifies that $\MM_3$ has larger MSE and the score differences all exceed the error bound, $t_q$, providing decisive evidence in favor of $\MM_1$. The comparison between $\MM_2$ and $\MM_1$ is more difficult as the oracle differences are much smaller in scale. The adjusted CV score differences mostly have the same sign as the oracle difference, but in all replicates, the estimated score differences were between $\pm t_q$, thus suggesting the comparison between $\MM_2$ and $\MM_1$ is inconclusive, as the available information in the data does not overcome the unidentifiable bias components of the conditional MSE.

\begin{figure}[t]
    \centering
    \includegraphics[width=.8\textwidth]{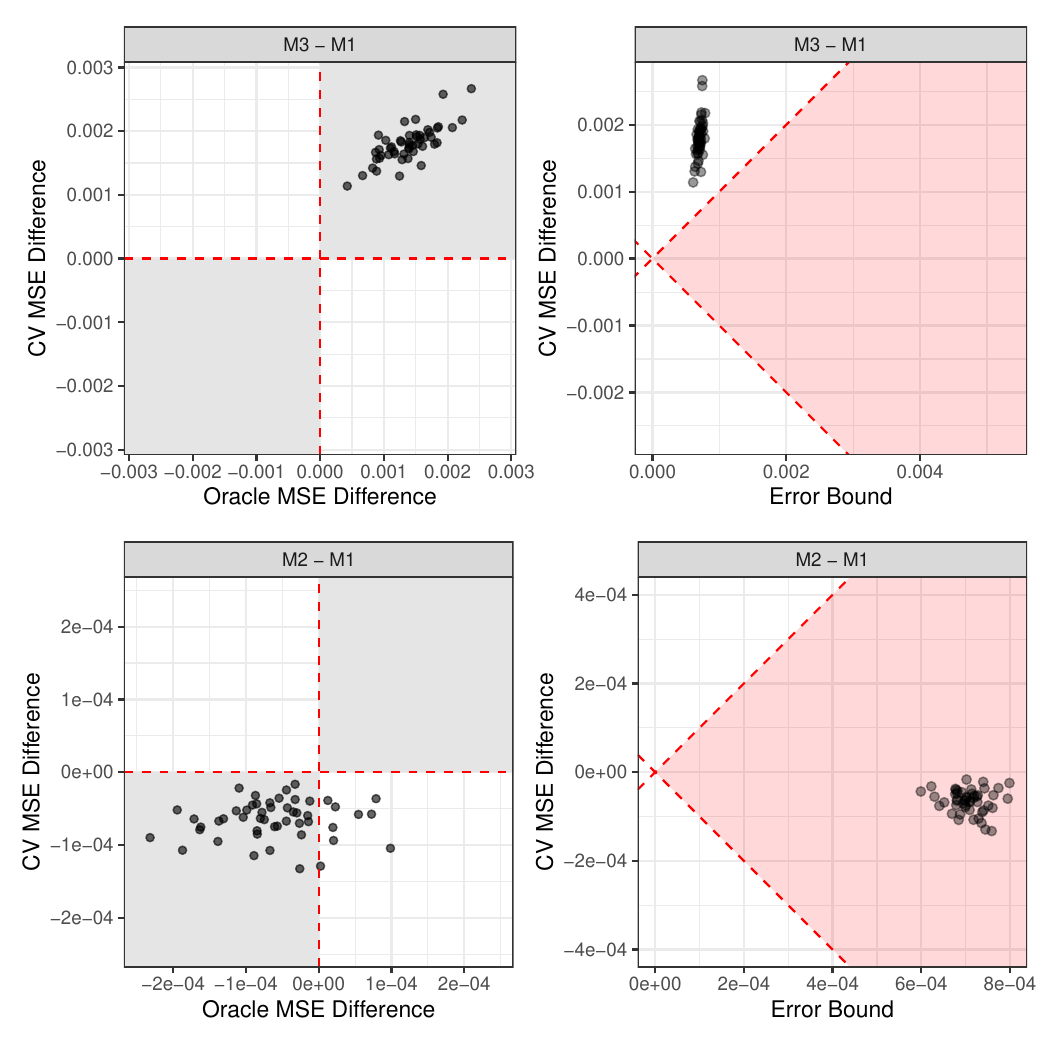}

  \caption{Province-level model comparison.
\textit{Left:} Adjusted CV score differences versus oracle full-sample MSE differences for $\mathcal{M}_3 - \mathcal{M}_1$ (top) and $\mathcal{M}_2 - \mathcal{M}_1$ (bottom). Each point represents one synthetic survey replicate. Points in the first and third quadrants indicate correct model ranking. \textit{Right:} Adjusted CV score differences versus the error bound. The red shaded region is the inconclusive region where  score differences are too small to support ranking of the two models.}  \label{fig:bd_diff_nntl}
\end{figure}

To gain additional insight on the model performance, we also examine the area-specific scores in  
Figure~\ref{fig:bd_diff_subntl}. For the comparison between $\MM_3$ and $\MM_1$, the over-smoothed $\MM_3$ departs noticeably from the other models in Copperbelt, Luapula, Lusaka, and Northern provinces, producing absolute score differences that are well above the error bound.
For comparing $\MM_2$ with $\MM_1$, area-specific score differences still remain tightly clustered near zero, with all points falling within the inconclusive region, except for Lusaka on a few occasions. 

\begin{figure}[htb]
    \centering
    \includegraphics[width=.9\textwidth]{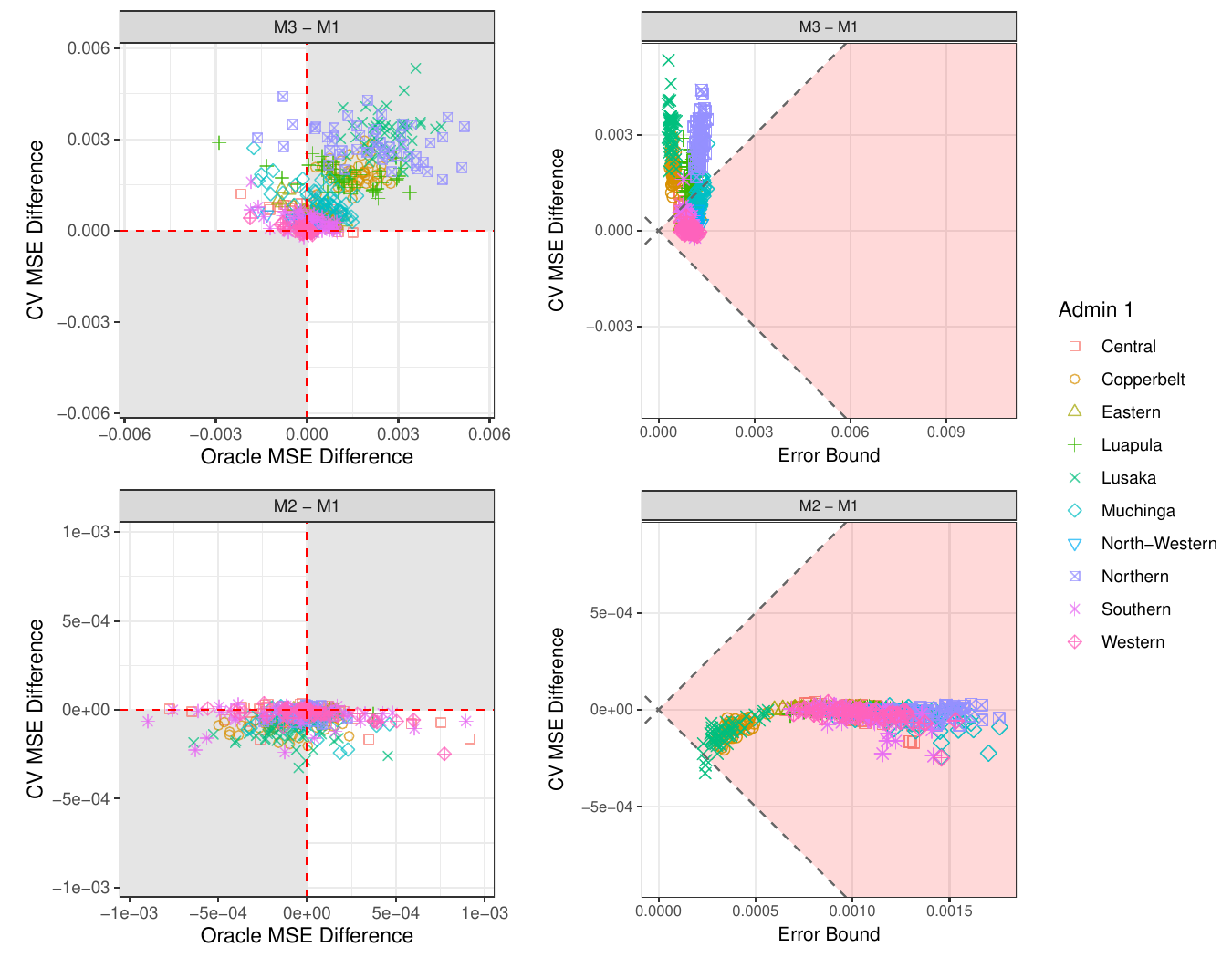}
 \caption{\textit{Left:} Area-level adjusted CV score differences versus oracle error differences for $\mathcal{M}_3 - \mathcal{M}_1$ (top) and $\mathcal{M}_2 - \mathcal{M}_1$ (bottom). \textit{Right:}  Adjusted CV score differences versus the error bound $t_i$ by  provinces.  Each point represents one simulated dataset, colored by province. }
    \label{fig:bd_diff_subntl}
\end{figure}

In summary, these results demonstrate that the CV score reliably ranks the models for the performance of subnational prevalence estimation, while the error bound $t_q$ provides the principled threshold for separating genuine differences from noise-dominated ones. Additional diagnostics and visualizations are provided in Supplementary Materials.


We also compare the LOAO strategy in Figure~\ref{fig:loao} for $\MM_1$ and $\MM_3$ under LOAO validation. As discussed in Section \ref{sec:loao}, LOAO penalizes the over-smoothed model less. Indeed, in this comparison, the procedure reached the opposite conclusion and ranked $\MM_3$ as the preferred model in every replicate, confirming the bias described in Section \ref{sec:loao}. 


\begin{figure}[htb]
    \centering
    \includegraphics[width=\textwidth]{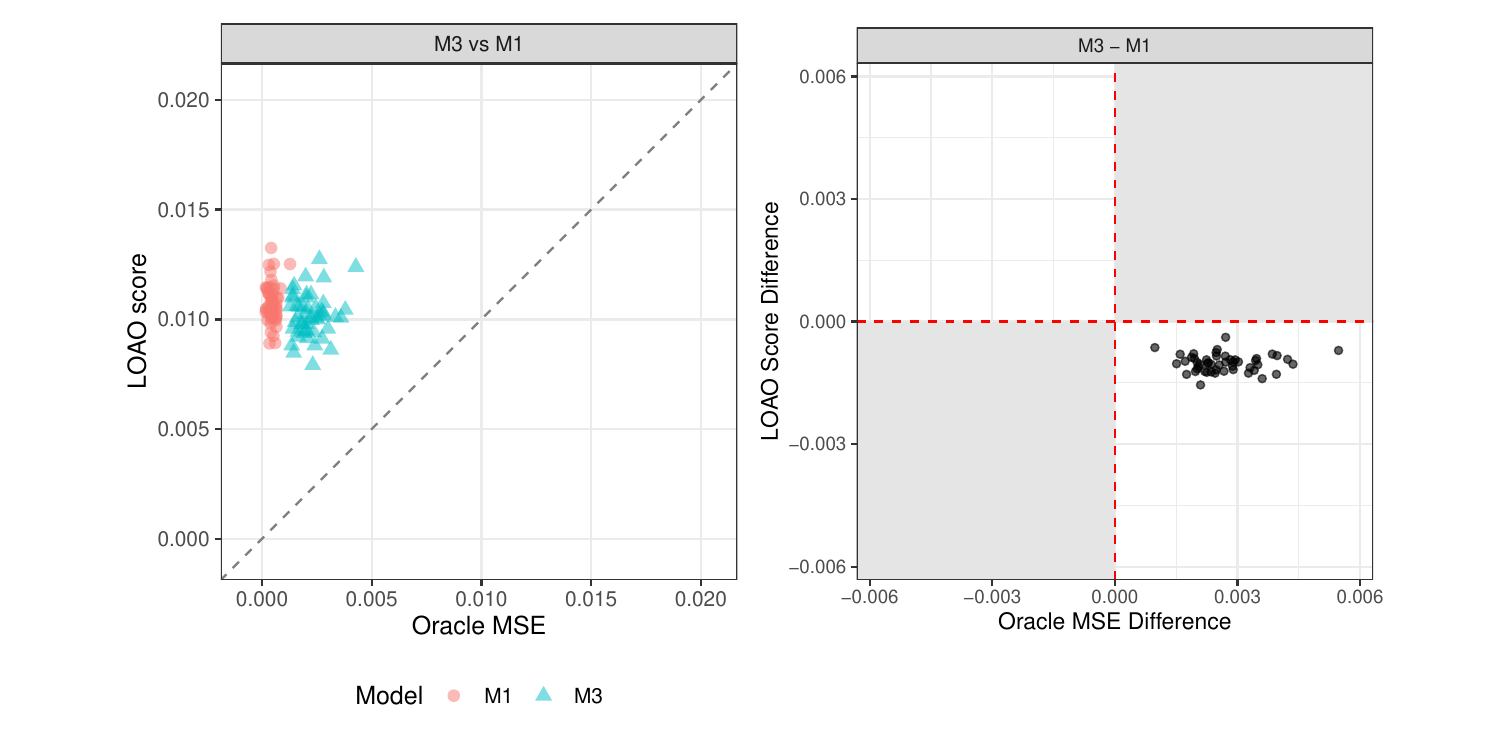}
\caption{Comparison of $\MM_1$ and $\MM_3$ under LOAO validation across 50 simulation replicates. \textit{Left}: LOAO scores versus oracle full-sample MSE for the two models. \textit{Right}: LOAO score difference, $\widehat{\mathrm{score}}_{\mathrm{LOAO}}^{\MM_3}-\widehat{\mathrm{score}}_{\mathrm{LOAO}}^{\MM_1}$, versus oracle full-sample MSE difference, $\mathrm{MSE}_{\mathrm{oracle}}^{\MM_3}-\mathrm{MSE}_{\mathrm{oracle}}^{\MM_1}$.}
    \label{fig:loao}
\end{figure}

In this simulation setting, the relative performance of the candidate models remains unchanged between the full-sample and oracle training MSE. When sample sizes are smaller, the two estimands may lead to different model rankings, which we investigate in the Supplementary Materials. We also compare different sample-splitting strategies and the choice of $K$ in the Supplementary Materials.

\subsection{District-Level modeling}\label{sec:sim-adm2}
We now turn to the finer geographical resolution with $115$ districts. 
We generate synthetic surveys using the same population described in Section \ref{sec:sim-adm1} and sample $60$ clusters per stratum and $30$ households per cluster. When considering subnational areas below the stratification level, direct estimates or the associated variance estimator can be unavailable due to data sparsity. Therefore, when evaluating direct estimates on the validation set, we consider only districts with at least one sampled cluster. Across the $50$ replicates, the number of districts with microdata for direct estimates ranges from $89$ to $99$, with a median of $94$. To account for the stronger spatial dependence at the district level, we replace the iid prior for $u_i$ in all three models in Section \ref{sec:sim-adm1} with a spatially structured BYM2 prior \citep{besag:york:mollie:91, riebler:etal:16}. A $PC(1, 0.01)$ prior is used for the marginal standard deviation parameter of the random effects in $\MM_1$ and $\MM_2$, and a $PC(0.01, 0.05)$ prior is used for $\MM_3$ to induce stronger spatial smoothing. In all models, we put a $PC(0.5, 2/3)$ prior on the proportion of variance that is spatial. Figure~\ref{fig:LIT-3060-ad2ad2_bd_diff_nntl_w} summarizes the comparison results across all $50$ replicates, and the conclusions are similar to that in Section \ref{sec:sim-adm1}, where $\MM_1$ is correctly favored over $\MM_3$ while $\MM_1$ and $\MM_2$ comparison is inconclusive for all replicate surveys.


\begin{figure}[htb]
    \centering
    \includegraphics[width=.6\textwidth]{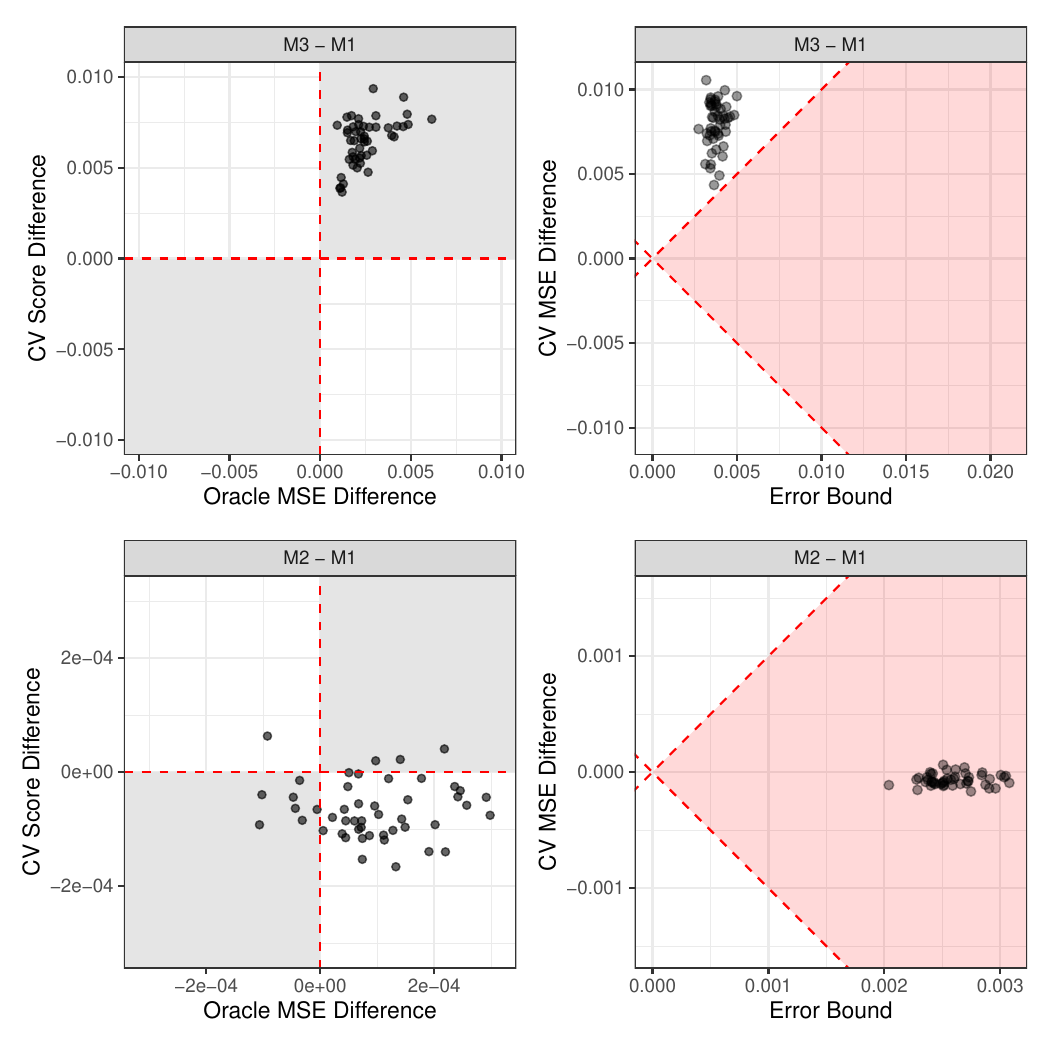}
     \caption{District-level model comparison.
\textit{Left:} Adjusted CV score differences versus oracle full-sample MSE differences for $\mathcal{M}_3 - \mathcal{M}_1$ (top) and $\mathcal{M}_2 - \mathcal{M}_1$ (bottom). Each point represents one synthetic survey replicate. Points in the first and third quadrants indicate correct model ranking. \textit{Right:} Adjusted CV score differences versus the error bound. The red shaded region is the inconclusive region where  score differences are too small to support ranking of the two models.}  
    \label{fig:LIT-3060-ad2ad2_bd_diff_nntl_w}
\end{figure}



\section{Analysis of the 2024 Zambia DHS}\label{sec:real-data}

We now return to the 2024 Zambia DHS survey and compare models estimating the percentage of women who are literate. The stratified cross-validation was conducted with the $20$ strata defined by province crossed with urban/rural residence. We consider the same candidate models in Section \ref{sec:sim-adm1} at the province level and those in Section \ref{sec:sim-adm2} at the district level. In both cases,  $\MM_1$ and $\MM_2$ and their corresponding prior choices have been routinely used as default models for analyzing DHS data \citep[see, e.g.,][]{dong2026toward}, and $\MM_3$ leads to over-smoothed estimators due to the more concentrated prior choice.

Province-level results from 5-fold cross-validation are reported in Figure~\ref{fig:real_data_eg}. Estimates from $\MM_3$ are heavily shrunk to the overall mean. Between $\MM_1$ and $\MM_2$, we observe slightly more shrinkage in $\MM_1$, and the shrinkage is more obvious in Lusaka and Copperbelt. The cross-validation procedure again identifies that $\MM_3$ has larger aggregated score that exceeds the estimated threshold, compared with $\MM_1$. The difference between $\MM_1$ and $\MM_2$, on the other hand, is substantially smaller and within the inconclusive region. 

\begin{figure}[!htb]
    \centering
        \includegraphics[width=.48\textwidth]{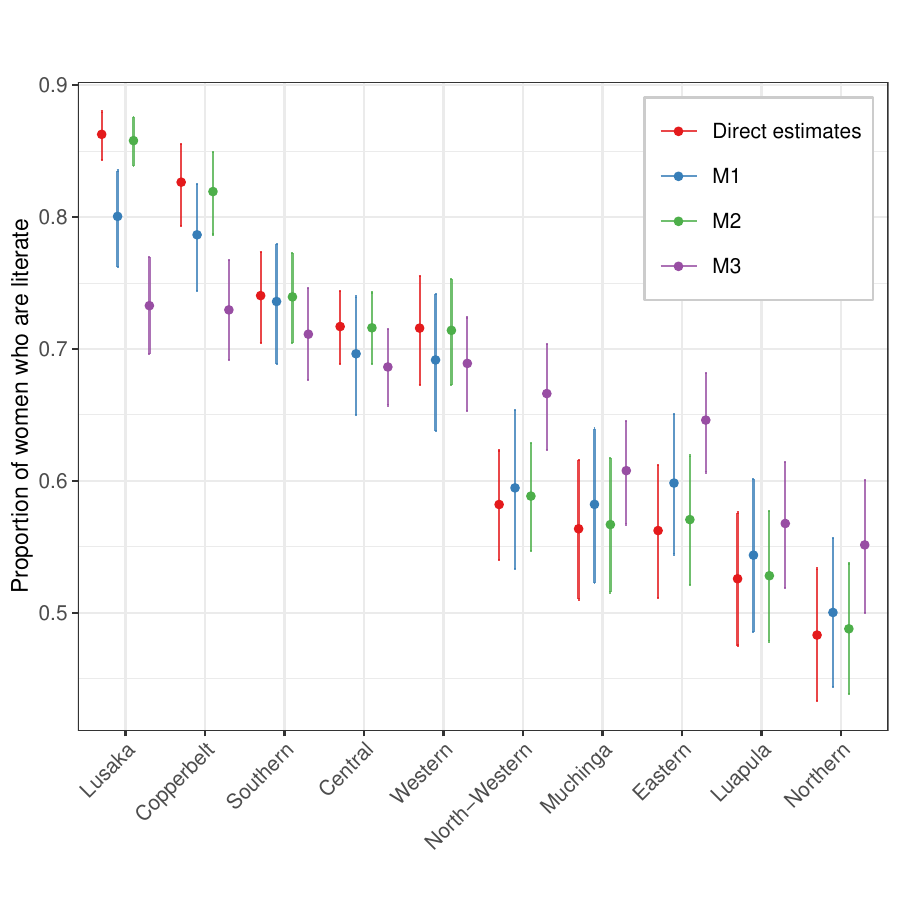}
        \includegraphics[width=.48\textwidth]{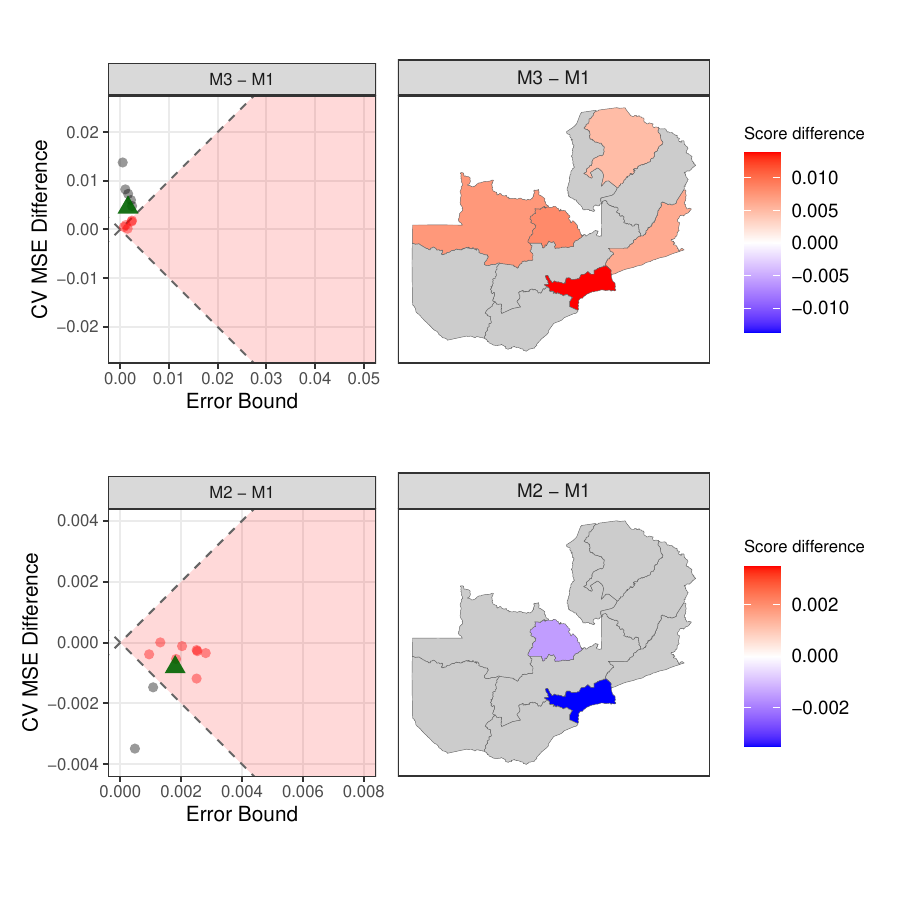}
    \caption{Comparison of province-level female literacy rate estimates . \textit{Left}: Point estimates and 90\% uncertainty intervals. \textit{Right}: $5$-fold adjusted CV score comparison by province. The top row compares $\MM_3$ with $\MM_1$, and the bottom row compares $\MM_2$ with $\MM_1$. In each scatter plot, points denote the 10 provinces and the triangle denotes the aggregated result. The maps show the associated area-level score differences, with grey area denoting inconclusive ranking. 
    }
    \label{fig:real_data_eg}
\end{figure}
A close examination of the area-specific scores in Figure \ref{fig:real_data_eg} provides additional insight into the behavior of the models. When comparing $\MM_1$ and $\MM_3$, the score differences are negative and exceed the area-specific threshold in magnitude across all ten provinces, yielding a uniform preference for $\MM_1$ over $\MM_3$. 
When comparing $\MM_1$ and $\MM_2$, while the overall score differences are smaller than the potential bias in magnitude, there are two provinces, Lusaka and Copperbelt, with score differences exceeding their area-specific thresholds. Lusaka and Copperbelt are the two provinces with the highest literacy rate and estimators from $\MM_1$ exhibit more shrinkage than that from $\MM_2$, as can be observed in Figure \ref{fig:real_data_eg}. The cross-validation procedure identifies such over-shrinkage and suggests $\MM_2$ is preferred in these two provinces, although the reduction in MSE from these two areas is not sufficient to overcome the unidentifiable bias in other provinces when considering a single summary score. For practitioners, we note that while this analysis suggests an inconclusive comparison between $\MM_1$ and $\MM_2$, a different set of aggregation weights may lead to $\MM_2$ being preferred over $\MM_1$. Thus, it is important to choose the appropriate aggregation weights based on the specific task at hand when a single decision is needed.


At the district level, one district has no sampled clusters and thus is omitted from the analysis. Similar to the province-level models, $\MM_1$ and $\MM_2$ are close to each other, while $\MM_3$ is over-smoothed. Figure~\ref{fig:ad2_real_diff} shows the district-level comparison, with similar model comparison results as the province level analysis. 
Compared to province-level models, district-level estimation uses less data and therefore has larger uncertainty. This leads to larger and more conservative error bounds and thus fewer areas with conclusive comparisons. In both analyses, the cross-validation procedure was able to identify the over-smoothed model $\MM_3$ as being inferior to the more reasonable options, while also showing that the Fay-Herriot model  $\MM_2$ is preferred over unit-level model $\MM_1$ in a subset of areas, though such differences are not large enough for conclusive ranking when considering all areas together.  Nevertheless, it is worth noting that this finding challenges the conventional view that Fay-Herriot models necessarily break down under severe data sparsity, leaving unit-level or geostatistical models as the only viable options. A key distinction between the two models here is that $\MM_2$ incorporates the survey design through the weighted direct estimates, whereas the unit-level model $\MM_1$ does not account for the stratification by urbanicity. This suggests that accounting for the design can matter more than model flexibility, and that the advantage of unit-level models at fine spatial resolutions may only materialize when those models properly account for the survey design. 


\begin{figure}
    \centering
    \includegraphics[width=0.7\linewidth]{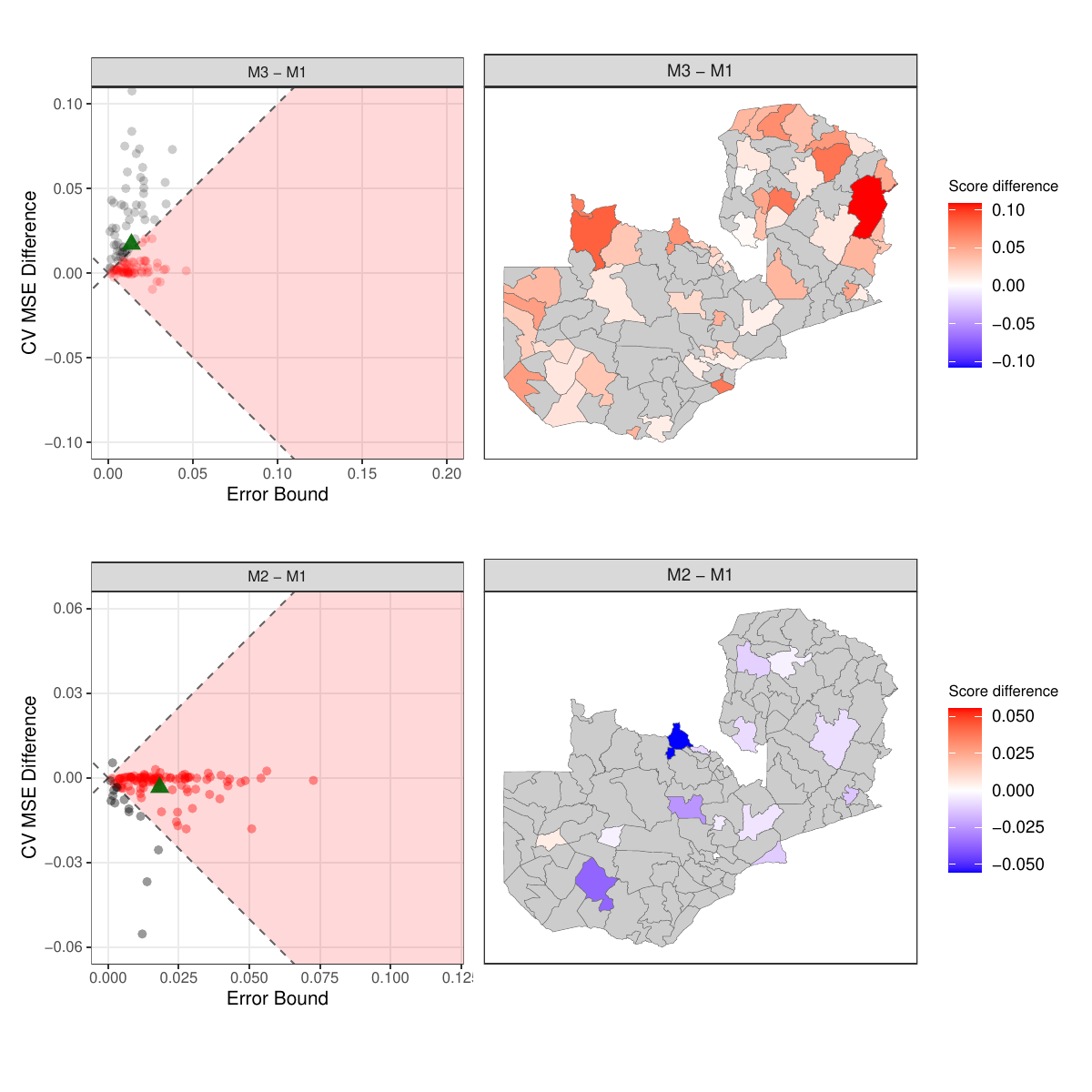}
    \caption{Comparison of district-level female literacy rate estimates using the 2024 Zambia DHS. The top row compares $\MM_3$ with $\MM_1$, and the bottom row compares $\MM_2$ with $\MM_1$. In each scatter plot, points denote the 115 districts and the triangle denotes the aggregated result. The maps show the associated area-level score differences, with grey area denoting inconclusive ranking.} 
    \label{fig:ad2_real_diff}
\end{figure}

\section{Discussion}\label{sec:discuss}
In this article, we proposed a design-based $K$-fold cross-validation framework for comparing small area estimators under complex survey designs, with application to subnational estimation of female literacy rates using the 2024 Zambia DHS. The framework splits household-level survey data within clusters to construct held-out design-based direct estimates, evaluates model-based estimates against these held-out estimates, and corrects for sampling variation through an adjusted CV score. We established a decomposition of the cross-validated squared error identifying an unidentifiable bias component, and derived a finite-sample bound on this remainder that provides a principled threshold for ranking competing models. We also demonstrated that the
widely used leave-one-area-out cross-validation targets out-of-sample extrapolation rather than in-sample smoothing accuracy and can reverse model rankings relative to the true performance ordering. Applied to the Zambia DHS at both the province and district level, the proposed framework reliably distinguished the over-smoothed model from the well-calibrated alternatives, while providing useful insights into the comparison between two sets of similar estimates from the Fay-Herriot and unit-level models.

The proposed cross-validation framework leaves several methodological questions open. First, although the aggregate score is defined for general area weights, the choice of the weighting scheme itself deserves further study. In some applications, one may prefer to use variance to aggregate the scores, so that discrepancies in areas with different uncertainty receive different emphasis. Such choices may be attractive when model comparison is intended to reflect not only average squared error, but also performance in areas where the estimator is most informative. At the same time, the theoretical consequences of variance-based weighting are not yet clear, especially when the weights depend on the fitted model through posterior uncertainty estimates.

The squared error loss considered in this paper evaluates only point estimates.  When full predictive distributions are available, one may instead use the continuous ranked probability score (CRPS), where  the full posterior predictive CDF is compared against held-out direct estimates. Establishing a similar decomposition and finite-sample error bound for CRPS can be a promising future research direction.

As with all sample-splitting methods, the success of the proposed procedure depends on the availability of sufficient data. When the target of inference is at the stratification level, e.g., provinces of Zambia, the impact of data sparsity is usually not severe. While we demonstrate that the procedure can also be useful at the level of $115$ districts, it remains an open question how to deal with even more extreme data sparsity, e.g., when no samples are collected in many areas. 

Finally, the optimal choice of $K$ in the $K$-fold cross-validation is another important question relevant to practitioners. As we demonstrate empirically in the Supplementary Materials, a larger $K$ reduces the gap between $\text{MSE}_A$ and the oracle full-sample MSE, at the cost of validation power due to the small sample sizes used for validation. Similar observations were made in \citet{kawano2026dt}.
We leave these topics for future research.

\bibliographystyle{apalike}
\bibliography{cv_sae}

\clearpage
\appendix

\section{Proofs}

Consider splitting a sample $S$ into two disjoint sets $A$ and $B = S\setminus A$ following one of the sample splitting schemes discussed in Section \ref{sec:method}. For a sequence of random variables $\{X_n\}$ and a sequence of constants $\{a_n\}$, $X_n = O_p(a_n)$ indicates that $X_n/a_n$ is bounded in probability and $X_n = o_p(a_n)$ indicates that $X_n/a_n$ converges in probability to zero. $O(\cdot)$ and $o(\cdot)$ denote the corresponding deterministic asymptotic bounds. 

\subsection{Proof of Theorem \ref{thm1}}
\begin{proof}
 Consider the conditional expectation of the first term:
\begin{align*}
    \EE[(\hat\theta_{A,i}^{\MM} - \hat\theta^{w}_{B, i})^2 \mid S] &= \EE[(\hat\theta_{A,i}^{\MM} - \theta_i - (\hat\theta^{w}_{B, i} - \theta_i))^2 \mid S] \\
    &= \EE[(\hat\theta_{A,i}^{\MM} - \theta_i)^2 \mid S] + \EE[(\hat\theta^{w}_{B, i} - \theta_i)^2 \mid S] - 2\EE[(\hat\theta_{A,i}^{\MM} - \theta_i)(\hat\theta^{w}_{B, i} - \theta_i) \mid S].
\end{align*}
Expanding the last two terms yields:
\begin{align*}
    \EE[(\hat\theta^{w}_{B, i} - \theta_i)^2 \mid S] &= \var(\hat\theta_{B,i}^w \mid S) + (\EE[\hat\theta^{w}_{B, i} - \theta_i \mid S])^2 = v_i + b_i^2, \\
    \EE[(\hat\theta_{A,i}^{\MM} - \theta_i)(\hat\theta^{w}_{B, i} - \theta_i) \mid S] &= \cov(\hat\theta_{A, i}^{\MM}, \hat\theta_{B,i}^w \mid S) + \EE(\hat\theta_{A,i}^{\MM} - \theta_i \mid S)\EE(\hat\theta_{B,i}^w - \theta_i \mid S) = c_i^{\MM} + d_i^{\MM}b_i.
\end{align*}
Taking the expectation over $S$ yields the result in the theorem.
\end{proof}

\subsection{Proof of Theorem \ref{thm2}}

\begin{proof}
From Theorem~\ref{thm1}, $\EE(\mbox{score}_S^{\MM}) = \EE[(\hat\theta_{A,i}^{\MM} - \theta_i)^2] + \EE(v_i + b_i^2) - 2\EE(b_id_i^{\MM})$. 
The difference is $\EE[(\hat\theta_{A,i}^{\MM_1} - \theta_i)^2] - \EE[(\hat\theta_{A,i}^{\MM_2} - \theta_i)^2] - 2\EE[b_i(d_i^{\MM_1} - d_i^{\MM_2})]$.
Since $\hat\theta^w$ is design unbiased, $\EE(b_i) = 0$. Thus, $\EE[b_i(d_i^{\MM_1} - d_i^{\MM_2})] = \cov(b_i, d_i^{\MM_1} - d_i^{\MM_2})$.
\end{proof}

%

\subsection{Proof of Theorem \ref{thm3}}
\label{s-thm3}

\begin{condition}[Linearization of the model-based estimator difference]
\label{cond:diff}
For each area $i$ and each pair of models $(\MM_1, \MM_2)$, there exist
bounded functions $\Delta\psi_{i,j}$ of $j \in S$, fixed given $S$, such that
for any subsample $A \subseteq S$ with $n_A = |A|$,
\begin{equation}
\hat\theta^{\MM_1}_{A,i} - \hat\theta^{\MM_2}_{A,i}
= \frac{1}{n_A}\sum_{j \in A}\Delta\psi_{i,j} + \Delta r_{A,i},
\qquad \Delta r_{A,i} = O_p(n_A^{-1}).
\end{equation}
\end{condition}

\begin{condition}[Linearization of the H\'ajek direct estimator]
\label{cond:hajek}
For each area $i$, there exist bounded functions $\psi^w_{i,j}$, defined on
units $j \in U_i$ and not depending on the random partition, such that for
any subset $B \subseteq S$ with $n_{B,i} = |B \cap U_i|$,
\begin{equation}
\hat\theta^w_{B,i} - \theta_i
= \frac{1}{n_{B,i}} \sum_{j \in B \cap U_i} \psi^w_{i,j} + r^w_{B,i},
\qquad
r^w_{B,i} = O_p(n_{B,i}^{-1}).
\label{eq:lin-hajek}
\end{equation}
\end{condition}

\begin{lemma}[Conditional expectation of a subsample-based difference under linearization]
\label{lem:subsamp-diff}
Let $S$ be a fixed sample of size $n = |S|$, and let $A \subset S$ be a
random subsample under a mechanism that assigns each unit $j \in S$ to $A$
with equal inclusion probability $\pi_A \in (0,1)$, with $n_A = \pi_A n$
under balanced partitioning. For area $i$ and two estimators
$\hat\theta^{\MM_1}_{A,i}$ and $\hat\theta^{\MM_2}_{A,i}$ of $\theta_i$,
suppose their difference admits the linearization
\begin{equation}
\hat\theta^{\MM_1}_{A,i} - \hat\theta^{\MM_2}_{A,i}
= \frac{1}{n_A}\sum_{j \in A}\Delta\psi_{i,j} + \Delta r_{A,i},
\qquad \Delta r_{A,i} = O_p(n_A^{-1}),
\label{eq:lem-diff-lin}
\end{equation}
where $\Delta\psi_{i,j}$ are bounded functions of unit $j$, fixed given $S$,
and the same expansion holds at $A = S$ with remainder
$\Delta r_{S,i} = O_p(n^{-1})$. Then
\begin{equation}
E\!\left(\hat\theta^{\MM_1}_{A,i} - \hat\theta^{\MM_2}_{A,i} \;\big|\; S\right)
= \hat\theta^{\MM_1}_{S,i} - \hat\theta^{\MM_2}_{S,i} + O_p(n^{-1}).
\label{eq:lem-diff-conclusion}
\end{equation}
\end{lemma}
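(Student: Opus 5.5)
The plan is to take the conditional expectation given $S$ of the linearization \eqref{eq:lem-diff-lin} term by term, and to compare it with the same expansion evaluated at $A = S$.

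\emph{Leading term.} Write the average as $\frac{1}{n_A}\sum_{j\in S}\mathbb{1}\{j\in A\}\Delta\psi_{i,j}$. Under balanced partitioning $n_A=\pi_A n$ is deterministic given $S$. The $\Delta\psi_{i,j}$ are fixed given $S$, and $E(\mathbb{1}\{j\in A\}\mid S)=\pi_A$ for every $j$. Hence
\[
E\Big(\frac{1}{n_A}\sum_{j\in A}\Delta\psi_{i,j}\,\Big|\,S\Big)=\frac{\pi_A}{\pi_A n}\sum_{j\in S}\Delta\psi_{i,j}=\frac{1}{n}\sum_{j\in S}\Delta\psi_{i,j}.
\]
This is exactly the leading term of the expansion at $A=S$. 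Using that expansion, it equals $\hat\theta^{\MM_1}_{S,i}-\hat\theta^{\MM_2}_{S,i}-\Delta r_{S,i}$, and $\Delta r_{S,i}=O_p(n^{-1})$ by hypothesis.

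\emph{Remainder.} It remains to show $E(\Delta r_{A,i}\mid S)=O_p(n^{-1})$. Since $n_A=\pi_A n$ with $\pi_A$ fixed, $O_p(n_A^{-1})=O_p(n^{-1})$. The delicate point, and the main obstacle, is passing this stochastic order through a conditional expectation: $\Delta r_{A,i}=O_p(n_A^{-1})$ does not by itself control $E(\cdot\mid S)$.

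I would handle this by reading the condition as uniform over the finitely many admissible partitions of $S$. Concretely, I would assume (or state as part of the regularity conditions) that $n\,\sup_{A}|\Delta r_{A,i}| = O_p(1)$, or equivalently that $n\,\Delta r_{A,i}$ is uniformly integrable over the resampling distribution. Then $|E(\Delta r_{A,i}\mid S)|\le \sup_A|\Delta r_{A,i}|=O_p(n^{-1})$. This is natural because the estimators are bounded in $[0,1]$ and the remainder is a second-order Taylor term in averages of bounded $\psi$'s.

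\emph{Conclusion.} Combining the leading term and the remainder gives $E(\hat\theta^{\MM_1}_{A,i}-\hat\theta^{\MM_2}_{A,i}\mid S)=\hat\theta^{\MM_1}_{S,i}-\hat\theta^{\MM_2}_{S,i}+O_p(n^{-1})$, which is \eqref{eq:lem-diff-conclusion}.

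If the fold assignment is within clusters rather than i.i.d. over units, the argument is unchanged. The only property used is that each unit's marginal inclusion probability equals $\pi_A$ and that $n_A$ is fixed given $S$, which holds for the within-cluster $K$-fold scheme of Section~\ref{sec:scv}. There $\pi_A=(K-1)/K$, with an $O(1/n)$ correction from rounding that is absorbed into the remainder.
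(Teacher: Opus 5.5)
Your proof matches the paper's on the leading term: the indicator rewrite, the cancellation $\pi_A/(\pi_A n)=1/n$, and the substitution of the expansion at $A=S$ are exactly its steps. It differs only in the remainder, and there your treatment is the sounder one. The paper handles $E(\Delta r_{A,i}\mid S)$ by asserting that the bounded influence function ``ensures that conditional expectation preserves the rate.'' But boundedness of $\Delta\psi_{i,j}$ says nothing about $\Delta r_{A,i}$, and the stated hypotheses alone do not give the conclusion. Take $\Delta\psi_{i,j}\equiv 0$, $\Delta r_{S,i}=0$, and $\Delta r_{A,i}=\tfrac12\mathbf{1}\{A\in\mathcal{E}_n\}$, where $\mathcal{E}_n$ is a set of partitions with conditional probability of order $n^{-1/2}$. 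Every hypothesis holds: indeed $\Delta r_{A,i}=o_p(n_A^{-1})$, and the estimators can be kept in $[0,1]$. Yet $E(\hat\theta^{\mathcal{M}_1}_{A,i}-\hat\theta^{\mathcal{M}_2}_{A,i}\mid S)$ is of exact order $n^{-1/2}$, while the full-sample difference is $0$. So a uniformity condition like yours is genuinely required, and the same example shows that boundedness of the estimators in $[0,1]$ does not make it automatic. It has to be imposed, not motivated as natural.

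Two minor corrections. First, your two formulations are not equivalent; the supremum bound is the stronger one. Either suffices, and so does $\sup_n E|n\,\Delta r_{A,i}|<\infty$ under the joint law, combined with Markov's inequality. Second, under the paper's household-level splitting, $n_A$ counts respondents. It is therefore random given $S$ whenever households contain different numbers of eligible women, so the departure from $n_A=\pi_A n$ is not mere rounding. Its effect on the conditional mean is still $O_p(n^{-1})$, but showing this needs a ratio expansion that lies outside the lemma as stated.
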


\begin{proof}
Taking conditional expectation of~\eqref{eq:lem-diff-lin} given $S$ and
writing the sum using indicators,
\begin{equation}
E\!\left(\hat\theta^{\MM_1}_{A,i} - \hat\theta^{\MM_2}_{A,i} \mid S\right)
= \frac{1}{n_A}\sum_{j \in S} P(j \in A \mid S)\,\Delta\psi_{i,j}
  + E[\Delta r_{A,i} \mid S],
\end{equation}
where the $\Delta\psi_{i,j}$ come out as constants since they are fixed
given $S$. Under equal inclusion probability $\pi_A$ and $n_A = \pi_A n$,
\begin{equation}
\frac{1}{n_A}\sum_{j \in S} P(j \in A \mid S)\,\Delta\psi_{i,j}
= \frac{\pi_A}{\pi_A n}\sum_{j \in S}\Delta\psi_{i,j}
= \frac{1}{n}\sum_{j \in S}\Delta\psi_{i,j}.
\label{eq:lem-diff-cancel}
\end{equation}
Applying~\eqref{eq:lem-diff-lin} at $A = S$ gives
$n^{-1}\sum_{j \in S}\Delta\psi_{i,j} = \hat\theta^{\MM_1}_{S,i} -
\hat\theta^{\MM_2}_{S,i} - \Delta r_{S,i}$. Substituting,
\begin{equation}
E\!\left(\hat\theta^{\MM_1}_{A,i} - \hat\theta^{\MM_2}_{A,i} \mid S\right)
= \hat\theta^{\MM_1}_{S,i} - \hat\theta^{\MM_2}_{S,i}
  - \Delta r_{S,i} + E[\Delta r_{A,i} \mid S].
\end{equation}

Since $n_A = \pi_A n$ with $\pi_A$ bounded away from zero,
$O_p(n_A^{-1}) = O_p(n^{-1})$. The bounded influence function ensures that
conditional expectation preserves the rate, so $E[\Delta r_{A,i} \mid S] = O_p(n^{-1})$. And $\Delta r_{S,i} = O_p(n^{-1})$, the remainder is $O_p(n^{-1})$,
yielding~\eqref{eq:lem-diff-conclusion}.
\end{proof}

\begin{remark}
The same argument applied to a single estimator $\hat\theta^w_{B,i}$ with
influence function $\psi^w_{i,j}$ supported on $U_i$ (rather than a
difference with influence function supported on all of $S$) yields
$E(\hat\theta^w_{B,i} - \theta_i \mid S)
= \hat\theta^w_{S,i} - \theta_i + O_p(n_i^{-1})$,
where $n_i = |S \cap U_i|$. We invoke this special case for the H\'ajek
direct estimator on $B$ in the proof of Theorem~\ref{thm3} below.
\end{remark}

\begin{proof} [proof of Theorem~\ref{thm3}]
Applying the Cauchy--Schwarz inequality to $e_i$ in Theorem~\ref{thm2},
\begin{equation}\label{eq:cs_exact}
|e_i| \le 2\sqrt{\var[E(\hat\theta_{B,i}^w - \theta_i \mid S)] \;
\var[E(\hat\theta_{A,i}^{\MM_1} - \hat\theta_{A,i}^{\MM_2} \mid S)]}.
\end{equation}
Applying Lemma~\ref{lem:subsamp-diff} in its single-estimator form to the
H\'ajek direct estimator on $B$ under Condition~\ref{cond:hajek},
\begin{equation}\label{eq:hajek-app}
E(\hat\theta_{B,i}^w - \theta_i \mid S)
= \hat\theta_{S,i}^w - \theta_i + O_p(n_i^{-1}).
\end{equation}
Applying Lemma~\ref{lem:subsamp-diff} directly to the model difference on
$A$ under Condition~\ref{cond:diff},
\begin{equation}\label{eq:model-app}
E\!\left(\hat\theta_{A,i}^{\MM_1} - \hat\theta_{A,i}^{\MM_2} \mid S\right)
= \hat\theta_{S,i}^{\MM_1} - \hat\theta_{S,i}^{\MM_2} + O_p(n^{-1}).
\end{equation}
Since $n^{-1} \le n_i^{-1}$, both remainders are $O_p(n_i^{-1})$. Using that
$\theta_i$ is non-random and each remainder has variance of smaller order
than the leading variance term,
\begin{align}
\var[E(\hat\theta_{B,i}^w - \theta_i \mid S)]
&= \var(\hat\theta_{S,i}^w) + o(n_i^{-1}),\\
\var[E(\hat\theta_{A,i}^{\MM_1} - \hat\theta_{A,i}^{\MM_2} \mid S)]
&= \var(\hat\theta_{S,i}^{\MM_1} - \hat\theta_{S,i}^{\MM_2}) + o(n_i^{-1}).
\end{align}
Substituting into~\eqref{eq:cs_exact} yields the leading-order bound
\begin{equation}\label{eq:approx_step}
|e_i|
\le 2\sqrt{\var(\hat\theta_{S,i}^w)\,
          \var(\hat\theta_{S,i}^{\MM_1} - \hat\theta_{S,i}^{\MM_2})}
  + o(n_i^{-1}).
\end{equation}
The variance of the difference
$\var(\hat\theta_{S,i}^{\MM_1} - \hat\theta_{S,i}^{\MM_2})$ is not directly
available from independent Bayesian fits of $\MM_1$ and $\MM_2$. To obtain
an implementable plug-in, we apply the inequality
$\var(X - Y) \le 2\var(X) + 2\var(Y)$, yielding
\begin{equation}\label{eq:bound_freq}
|e_i|
\le 2\sqrt{2\,\var(\hat\theta_{S,i}^w)\,
           [\var(\hat\theta_{S,i}^{\MM_1})
          + \var(\hat\theta_{S,i}^{\MM_2})]}
  + o(n_i^{-1}).
\end{equation}
This bound is conservative whenever the cross-model covariance
$\cov(\hat\theta_{S,i}^{\MM_1}, \hat\theta_{S,i}^{\MM_2})$ is non-negative,
which is expected when $\MM_1$ and $\MM_2$ are smoothing estimators fitted
to the same sample and targeting the same estimand.

The sampling variances on the right-hand side of~\eqref{eq:bound_freq} are
not available from a single sample. For the direct estimator, we use the
standard design-based variance estimator $\widehat{\var}(\hat\theta_{S,i}^w)$.
For each model-based estimator, we use the posterior variance
$\var_{\mathrm{post}}(\theta_i^{\MM})$ as a plug-in approximation. The
resulting plug-in threshold is
\begin{equation}\label{eq:bound_plugin}
t_i = 2\sqrt{2\,\widehat{\var}(\hat\theta_{S,i}^w)\,
            [\var_{\mathrm{post}}(\theta_i^{\MM_1})
           + \var_{\mathrm{post}}(\theta_i^{\MM_2})]}.
\end{equation}
When considering Bayesian models, the posterior variance, $\var_{\text{post}}(\theta_{S,i}^{\MM})$ provides a plug-in estimator for $\widehat{\var}(\hat\theta_{S, i}^{\MM})$, with $M$ fixed and $n_i \to \infty$, the posterior of $\theta_i$ concentrates at rate $n_i^{-1}$ and the posterior variance is asymptotically equivalent to the sampling variance of the posterior mean by Bernstein–von Mises theorem. In finite samples, empirical validation that the resulting $t$ is an upper bound for remainder $|e|$ is provided in Figure~\ref{fig:S_theo3} in the Supplementary Materials.
\end{proof}

\subsection{Proof of Theorem \ref{thm:consistency}}

\begin{proof} 
The remainder term is $e_i = -2\cov(b_i, d_i^{\MM_1} - d_i^{\MM_2})$. By Cauchy–Schwarz inequality, $|e_i| \le 2\sqrt{\var(b_i)\var(d_i^{\MM_1} - d_i^{\MM_2})}$. Under the design regularity assumptions and law of total variance, $\var(b_i) \le \var(\hat\theta_{B,i}^w) = O(n_i^{-1}) \to 0$.
For the second term, $\var(d_i^{\MM_1} - d_i^{\MM_2}) \le 2\var(d_i^{\MM_1}) + 2\var(d_i^{\MM_2})$, and $\var(d_i^{\MM}) \le \var(\hat\theta_{A,i}^{\MM})$ by the law of total variance. Thus given the assumption that $\var(\hat\theta_{A,i}^{\MM})$ is bounded,  $\var(d_i^{\MM_1} - d_i^{\MM_2})$ is also bounded. Hence $e_i \to 0$.

\end{proof}

\clearpage
\section{Additional simulation results}


To support the analysis presented in the main text, this section provides additional visual evaluation of model performance, error decomposition, and cross-validation reliability. Figure \ref{fig:S_map1_diff} begins this assessment by illustrating the spatial distribution of estimation bias across provinces, comparing direct estimates and three model-based approaches ($\mathcal{M}_1, \mathcal{M}_2, \mathcal{M}_3$) against the known population truth. Figure \ref{fig:S_score_scatter_national} compares the naive and adjusted CV scores against the two oracle MSE estimands. Figure \ref{fig:S_theo3} validates the approximated error bounds against the true remainder terms. Finally, Figure \ref{fig:S_score_hist_sub_h} disaggregates the CV scores and shows a distribution of the area-specific scores across the ten provinces, which clearly illustrates which provinces drive the differences in the scores.
\begin{figure}[!ht]
    \centering
    \includegraphics[width=.5\textwidth]{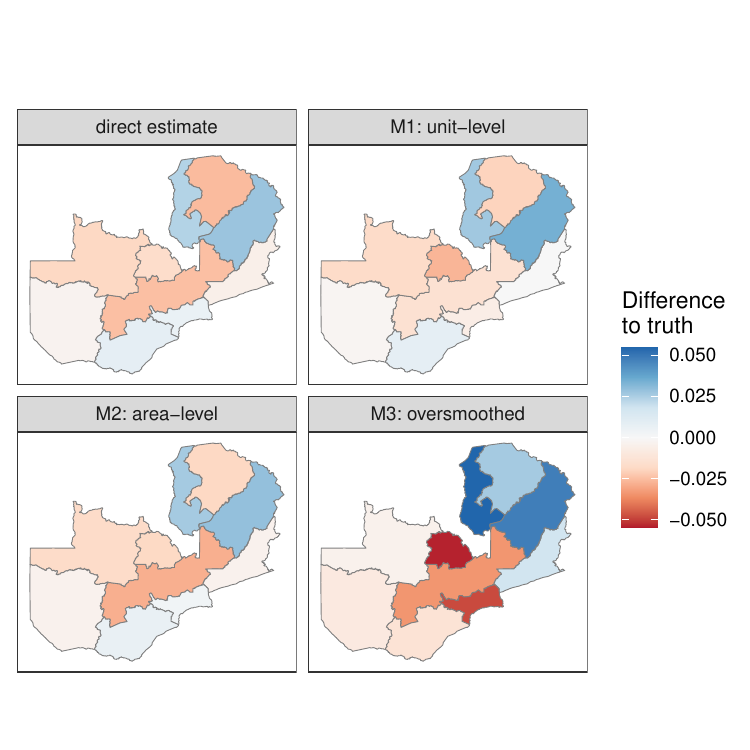}
    \caption{Province-level maps of difference of population truth with direct estimator and three model-based estimators ($\mathcal{M}_1$, $\mathcal{M}_2$, $\mathcal{M}_3$), for a single simulation replicate under the moderate sample size setting.}
    \label{fig:S_map1_diff}
\end{figure}

\begin{figure}
    \centering
    \includegraphics[width=0.6\linewidth]{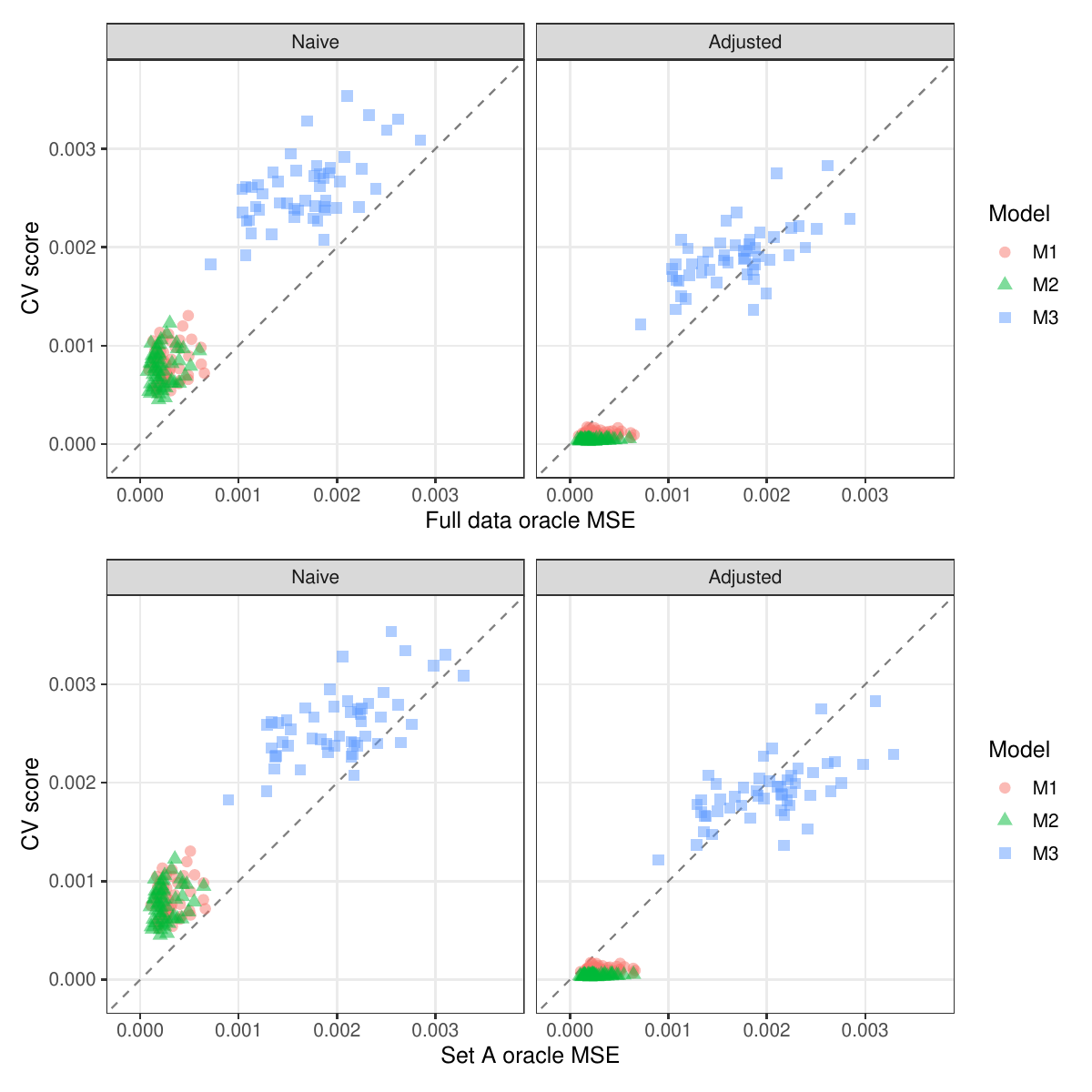}
\caption{Scatter plots of CV scores versus oracle errors under moderate sample size setting for the three models $\MM_1$, $\MM_2$, and $\MM_3$. Columns correspond to the naive and adjusted scores. The top row uses the full-sample oracle error, whereas the bottom row uses the Set A oracle error. Each point represents one simulation replicate, and the dashed $x=y$ line marks exact agreement between the CV score and the oracle error.}    \label{fig:S_score_scatter_national}
\end{figure}

\begin{figure}[!ht]
    \centering
    \includegraphics[width=.6\textwidth]{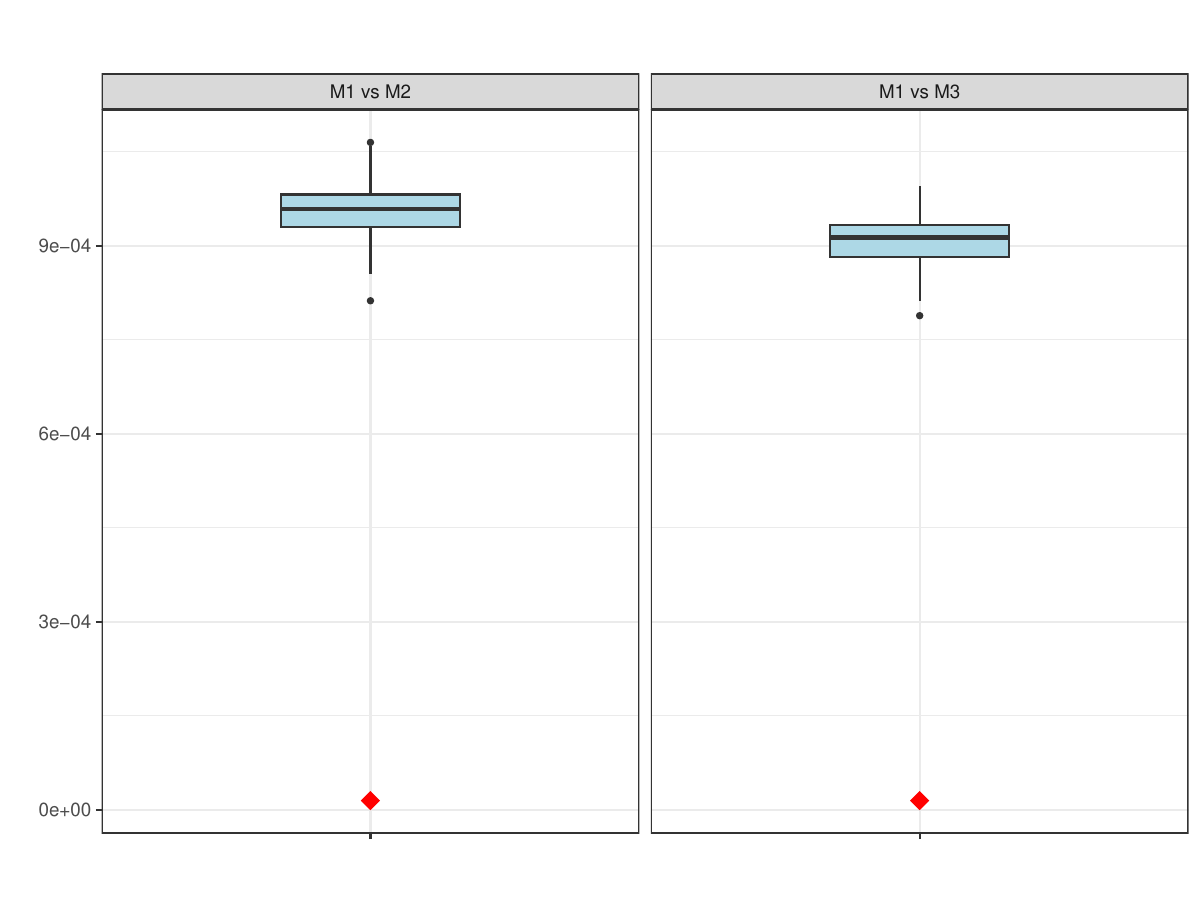}
    \caption{Distribution across the 50 simulation replicates of the plug-in error bound $\hat{t}_i$ (boxplots) and the absolute true remainder $|e_i|$ (red points), for the score-difference comparisons
$\mathcal{M}_1$ vs.~$\mathcal{M}_2$ (left) and $\mathcal{M}_1$ vs.~$\mathcal{M}_3$ (right) under the moderate sample-size setting.}
\label{fig:supp-bound-tightness}
    \label{fig:S_theo3}
\end{figure}

\begin{figure}[!ht]
    \centering
    \includegraphics[width=\textwidth]{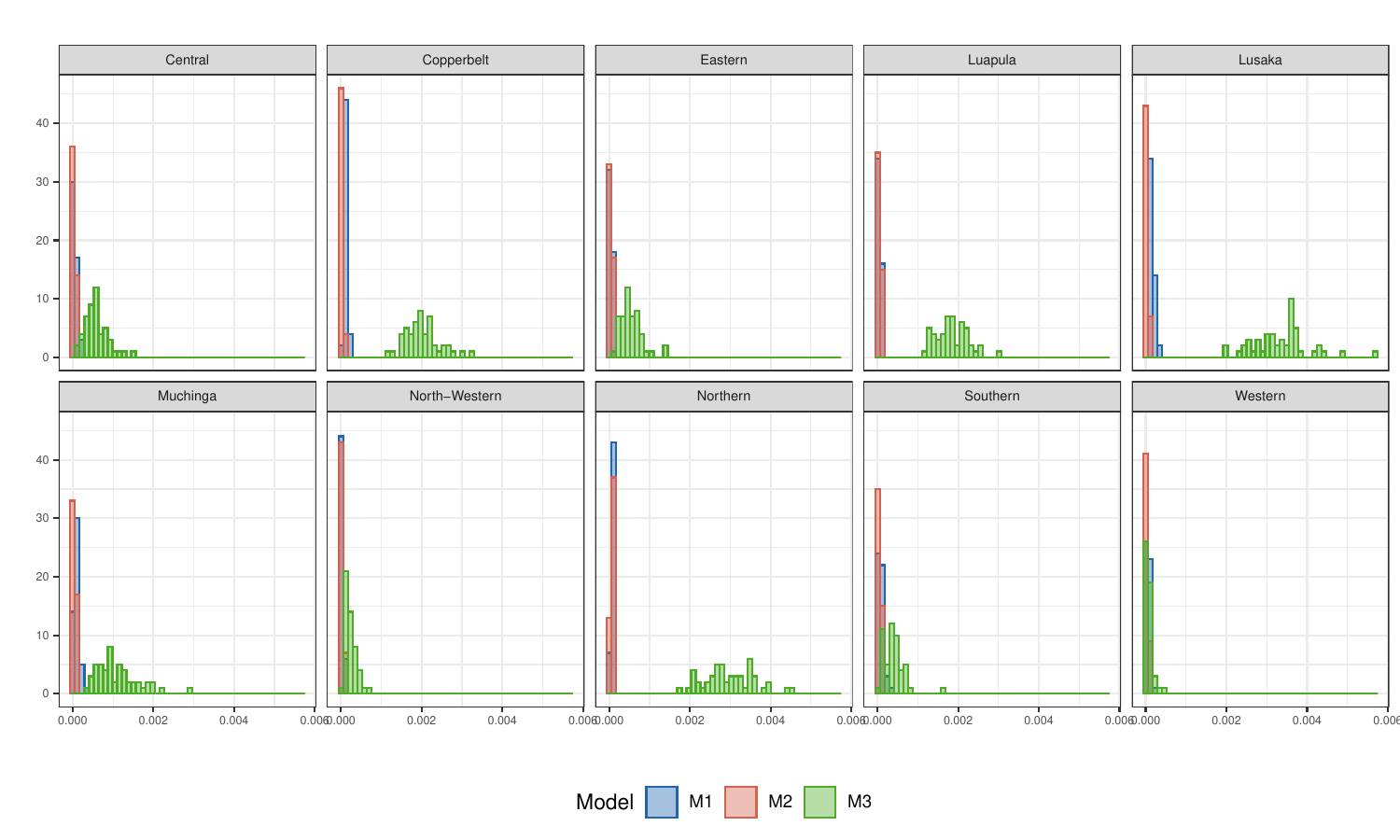}
 \caption{Area-level adjusted CV score distributions across ten provinces under moderate sample size setting, for model $\MM_1$, $\MM_2$, and $\MM_3$. Each histogram summarizes the distribution of area-specific adjusted CV scores over 50 
simulation replicates. The score distributions of $\MM_1$ and $\MM_2$ are nearly identical within each province, while $\MM_3$ is consistently shifted toward larger values.}
    \label{fig:S_score_hist_sub_h}
\end{figure}


\clearpage

\section{Model comparison using naive CV scores}

As an alternative to the proposed adjusted score comparison in the main paper, one may also evaluate the differences in naive CV scores of different models, and absorb the $v_i$ and $2c_i^\MM$ terms in the remainder bias term. We examine this approach in this section.

Similar to Theorem~\ref{thm3}, 
we can derive the bias for the naive score difference, as detailed in Theorem~\ref{thm:naive_diff} below, and a bound on the bias is provided in Theorem~\ref{thm:naive_bound}.

\begin{theorem}[Naive CV score difference]
\label{thm:naive_diff}
For the $i$-th area, the expected difference of naive CV scores satisfies
\begin{equation}\label{eq:naive_diff}
  \mathbb{E}\!\left[
    \widehat{\mathrm{score}}^{\mathcal{M}_1}_{\mathrm{naive},i}
    - \widehat{\mathrm{score}}^{\mathcal{M}_2}_{\mathrm{naive},i}
  \right]
  =
  \mathbb{E}\!\left[
    (\hat\theta^{\mathcal{M}_1}_{A,i} - \theta_i)^2
    - (\hat\theta^{\mathcal{M}_2}_{A,i} - \theta_i)^2
  \right]
  + f_i,
\end{equation}
where the expectation is over the sampling distribution of $S$ and the 
remainder is
\begin{equation}\label{eq:fi}
  f_i
  = -2\,\mathbb{E}\!\left[
      (\hat\theta^{w}_{B,i} - \theta_i)
      \bigl(\hat\theta^{\mathcal{M}_1}_{A,i} - \hat\theta^{\mathcal{M}_2}_{A,i}\bigr)
    \right].
\end{equation}
\end{theorem}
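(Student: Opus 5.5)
The identity is purely algebraic, so the plan is a pointwise expansion around $\theta_i$ followed by taking expectations. No appeal to the bias-component machinery of Theorem~\ref{thm1} is needed, although the result is consistent with it.

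Write the naive score for area $i$ as the fold average $K^{-1}\sum_k(\hat\theta_i^{\MM,(-k)}-\hat\theta_i^{w,(k)})^2$. Each fold corresponds to a partition $(A,B)$ of $S$, and the folds are exchangeable under the random assignment of SSUs. Hence the expectation over the joint distribution of $S$ and the partition equals $\EE[(\hat\theta^{\MM}_{A,i}-\hat\theta^w_{B,i})^2]$ for a generic split. Here $\EE$ denotes the total expectation, which by the tower property is $\EE[\EE[\cdot\mid S]]$.

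For each fixed realization, set $X_\ell=\hat\theta^{\MM_\ell}_{A,i}-\theta_i$ for $\ell=1,2$ and $Z=\hat\theta^w_{B,i}-\theta_i$. Then
\[
(X_1-Z)^2-(X_2-Z)^2 = X_1^2-X_2^2-2Z(X_1-X_2).
\]
The $Z^2$ terms cancel exactly; this is why no variance correction or unbiasedness assumption on the direct estimator is needed. The difference $X_1-X_2$ equals $\hat\theta^{\MM_1}_{A,i}-\hat\theta^{\MM_2}_{A,i}$, because $\theta_i$ drops out.

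Taking the total expectation of both sides and using linearity then gives the main term $\EE[(\hat\theta^{\MM_1}_{A,i}-\theta_i)^2-(\hat\theta^{\MM_2}_{A,i}-\theta_i)^2]$ plus $f_i=-2\EE[(\hat\theta^w_{B,i}-\theta_i)(\hat\theta^{\MM_1}_{A,i}-\hat\theta^{\MM_2}_{A,i})]$. This is exactly \eqref{eq:naive_diff}--\eqref{eq:fi}.

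The main point requiring care is the first reduction. The fold average uses $K$ dependent splits, and I must check that each fold has the same marginal distribution as the generic $(A,B)$, which holds by symmetry of the random household assignment. Then the average of expectations equals the single generic expectation. Integrability is implicit, since all quantities are bounded prevalences.

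As a consistency check, conditioning on $S$ inside $f_i$ gives $f_i=-2\EE(c_i^{\MM_1}-c_i^{\MM_2})-2\EE[b_i(d_i^{\MM_1}-d_i^{\MM_2})]$. This matches the difference of the bias expressions in Theorem~\ref{thm1}, with the common $v_i+b_i^2$ terms cancelling.
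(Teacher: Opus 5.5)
Your proof is correct, and it takes a more direct route than the paper.

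The paper starts from the conditional decomposition of Theorem~\ref{thm1}. It writes each model's expected naive score as the oracle training error plus $\EE(v_i+b_i^2-2c_i^{\MM}-2b_id_i^{\MM})$, and cancels the model-free terms $v_i$ and $b_i^2$ in the difference. It then recombines $c_i^{\MM}+b_id_i^{\MM}=\EE[(\hat\theta^{\MM}_{A,i}-\theta_i)(\hat\theta^w_{B,i}-\theta_i)\mid S]$ and removes the conditioning by iterated expectations.

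You instead use the pointwise identity $(X_1-Z)^2-(X_2-Z)^2=X_1^2-X_2^2-2Z(X_1-X_2)$. The $Z^2$ cancellation, which is the paper's $v_i+b_i^2$ cancellation before conditioning, then happens realization by realization, and no conditional moments are needed.

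What your route buys:
\begin{itemize}
\item It is shorter.
\item It makes plain that only linearity of expectation is used, with no design unbiasedness.
\item It makes explicit the reduction from the $K$-fold average to a generic split via exchangeability of the folds. The paper leaves this implicit by identifying the naive score's expectation with the generic-split quantity in Theorem~\ref{thm1}.
\end{itemize}

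What the paper's route buys is that it displays $f_i$ directly as $-2\EE(c_i^{\MM_1}-c_i^{\MM_2})-2\EE[b_i(d_i^{\MM_1}-d_i^{\MM_2})]$. Since $\EE(b_i)=0$, this equals $e_i-2\EE(c_i^{\MM_1}-c_i^{\MM_2})$: the adjusted-score remainder of Theorem~\ref{thm2} plus the covariance-penalty difference the naive score leaves uncorrected. That is what motivates comparing the two scores. Your closing consistency check recovers the first form.
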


\begin{proof}
From Theorem~\ref{thm1}, taking the difference between $\mathcal{M}_1$ and $\mathcal{M}_2$, the 
model-free terms $\mathbb{E}[v_i]$ and $\mathbb{E}[b_i^2]$ cancel exactly, 
leaving
\begin{equation*}
  \mathbb{E}\!\left[
    \widehat{\mathrm{score}}^{\mathcal{M}_1}_{\mathrm{naive},i}
    - \widehat{\mathrm{score}}^{\mathcal{M}_2}_{\mathrm{naive},i}
  \right]
  =
  \mathbb{E}\!\left[
    (\hat\theta^{\mathcal{M}_1}_{A,i} - \theta_i)^2
    - (\hat\theta^{\mathcal{M}_2}_{A,i} - \theta_i)^2
  \right]
  - 2\mathbb{E}\!\left[c^{\mathcal{M}_1}_i - c^{\mathcal{M}_2}_i\right]
  - 2\mathbb{E}\!\left[b_i(d^{\mathcal{M}_1}_i - d^{\mathcal{M}_2}_i)\right].
\end{equation*}
Collecting the last two terms, and using the identity
$c^{\mathcal{M}}_i + b_i d^{\mathcal{M}}_i 
= \mathbb{E}[(\hat\theta^{\mathcal{M}}_{A,i} - \theta_i)
  (\hat\theta^w_{B,i}-\theta_i) \mid S]$
from Theorem~\ref{thm1}, the remainder becomes
\begin{equation*}
  f_i
  = -2\,\mathbb{E}\!\left[
      \mathbb{E}\!\left[
        (\hat\theta^{w}_{B,i} - \theta_i)
        \bigl(\hat\theta^{\mathcal{M}_1}_{A,i} - \hat\theta^{\mathcal{M}_2}_{A,i}\bigr)
        \,\Big|\, S
      \right]
    \right],
\end{equation*}
which equals $-2\,\mathbb{E}[(\hat\theta^w_{B,i}-\theta_i)
(\hat\theta^{\mathcal{M}_1}_{A,i}-\hat\theta^{\mathcal{M}_2}_{A,i})]$ 
by the law of iterated expectations.
\end{proof}

\begin{theorem}[Error bound for naive CV score difference]
\label{thm:naive_bound}
The remainder $f_i$ in Equation~\eqref{eq:fi} satisfies
\begin{equation}\label{eq:naive_bound_ineq}
  |f_i|
  \leq
  2\sqrt{
    \mathbb{E}\!\left[
      (\hat\theta^{w}_{B,i} - \theta_i)^2
    \right]
    \cdot
    \mathbb{E}\!\left[
      \bigl(\hat\theta^{\mathcal{M}_1}_{A,i} 
            - \hat\theta^{\mathcal{M}_2}_{A,i}\bigr)^2
    \right]
  }.
\end{equation}
For $K$-fold CV, a plug-in estimator of this bound is
\begin{equation}\label{eq:thatnaive}
 \hat{t}_{\text{naive},i}
  =
  2\sqrt{
    \left[
      \frac{1}{K}\sum_{k=1}^{K}
      \widehat{\mathrm{var}}\!\left(\hat\theta^{w}_{B_k,i}\right)
    \right]
    \left[
      \frac{1}{K}\sum_{k=1}^{K}
      \!\left(
        \hat\theta^{\mathcal{M}_1}_{A_k,i}
        - \hat\theta^{\mathcal{M}_2}_{A_k,i}
      \right)^{\!2}
    \right]
  }.
\end{equation}
\end{theorem}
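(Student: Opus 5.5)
The inequality \eqref{eq:naive_bound_ineq} is a single application of the Cauchy--Schwarz inequality. The remainder is $f_i=-2\,\EE[XY]$ with $X=\hat\theta^w_{B,i}-\theta_i$ and $Y=\hat\theta^{\MM_1}_{A,i}-\hat\theta^{\MM_2}_{A,i}$, where the expectation is over both the sampling of $S$ and the random partition given $S$. By the law of iterated expectations (as in the proof of Theorem~\ref{thm:naive_diff}), this is an ordinary expectation over the joint distribution of $(S,A,B)$. Cauchy--Schwarz on that joint space then gives $|\EE[XY]|\le\sqrt{\EE[X^2]\,\EE[Y^2]}$, and multiplying by $2$ yields the stated bound. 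Unlike Theorem~\ref{thm3}, no centering is needed here, because $f_i$ is a raw cross moment rather than a covariance. For the same reason, no linearization Conditions~\ref{cond:diff}--\ref{cond:hajek} or Lemma~\ref{lem:subsamp-diff} are required. The only assumption is finite second moments, which holds automatically since all estimates lie in $[0,1]$.

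Next I justify the plug-in \eqref{eq:thatnaive} factor by factor.

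\textbf{Model-difference factor.} In $K$-fold CV each fold gives one realized draw of $(A_k,B_k)$ from the partition distribution given $S$. Hence
\[
K^{-1}\sum_k\bigl(\hat\theta^{\MM_1}_{A_k,i}-\hat\theta^{\MM_2}_{A_k,i}\bigr)^2
\]
is an unbiased estimator of $\EE[Y^2\mid S]$ when folds are exchangeable. Averaging over $S$, it is unbiased for $\EE[Y^2]$.

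\textbf{Direct-estimate factor.} Here I plan to use design unbiasedness, $\EE[\hat\theta^w_{B,i}]=\theta_i$, the assumption of Theorem~\ref{thm1}. This gives $\EE[X^2]=\var(\hat\theta^w_{B,i})$, the \emph{unconditional} design variance of the held-out direct estimator. Each fold $B_k$ preserves the multi-stage design, with weights rescaled by $1/K$. The standard linearization variance estimator $\widehat{\var}(\hat\theta^w_{B_k,i})$ computed from fold $k$ is therefore approximately unbiased for $\var(\hat\theta^w_{B_k,i})$. Averaging over the $K$ folds keeps this property and reduces the variance of the estimate.

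The main delicate point is the direct-estimate factor. The Cauchy--Schwarz bound involves the full unconditional variance of $\hat\theta^w_{B,i}$, which combines the variance of $S$ with the split variance. So I must argue that the within-fold design-based estimator targets this total variance rather than $v_i=\var(\hat\theta^w_{B,i}\mid S)$. I would make this argument by viewing the composite of drawing $S$ and then subsampling SSUs within clusters as a single two-stage design with clusters as PSUs. The ultimate-cluster variance estimator applied to $B_k$ is then approximately unbiased for the total variance, up to the usual with-replacement approximation. That approximation is conservative, which preserves the bound's role as a threshold.

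Finally, I would note that the plug-in is a ratio-type estimate. The product of two individually (approximately) unbiased averages is not exactly an unbiased estimator of the product of their targets. So the plug-in carries the same approximate status as $t_i$ in Theorem~\ref{thm3}.
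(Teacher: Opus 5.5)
Your proof is correct. The inequality rests on the same tool as the paper's, with one difference in packaging. Write $X=\hat\theta^w_{B,i}-\theta_i$ and $Y=\hat\theta^{\MM_1}_{A,i}-\hat\theta^{\MM_2}_{A,i}$. You apply Cauchy--Schwarz once on the joint distribution of $S$ and the partition. The paper instead applies it conditionally on $S$, then uses Jensen to pass from $\EE\bigl[\sqrt{\EE[X^2\mid S]\,\EE[Y^2\mid S]}\bigr]$ to $\sqrt{\EE[X^2]\,\EE[Y^2]}$. Your route is shorter; the paper's passes through a slightly sharper intermediate bound that it does not use.

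The substantive difference is how the direct-estimate factor of the plug-in is justified. The paper writes $\EE[X^2]=\EE[v_i]+\EE[b_i^2]$ and treats $K^{-1}\sum_k\widehat{\var}(\hat\theta^w_{B_k,i})$ as an estimator of $\EE[v_i]$. It then drops $\EE[b_i^2]$ as negligible, citing approximate design-unbiasedness. You instead treat ``draw $S$, then split SSUs within clusters'' as one two-stage design. On that view the fold-level design-based estimator targets $\var(\hat\theta^w_{B,i})=\EE[v_i]+\var(b_i)=\EE[X^2]$ directly.

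Your justification is the sharper one. Design-unbiasedness gives $\EE[b_i]=0$, not a small $\EE[b_i^2]$. Since $b_i\approx\hat\theta^w_{S,i}-\theta_i$, we have $\EE[b_i^2]\approx\var(\hat\theta^w_{S,i})$. This is already about $\EE[v_i]/(K-1)$ under simple random subsampling. It is larger under clustering, because with CV-SSU the between-PSU variability sits entirely in $b_i$. Your argument therefore shows why the paper's plug-in formula is appropriate without relying on that approximation.
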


\begin{proof}
Apply the Cauchy-Schwarz inequality to the conditional expectation given $S$:
\begin{equation*}
  \left|
    \mathbb{E}\!\left[
      (\hat\theta^w_{B,i}-\theta_i)
      \bigl(\hat\theta^{\mathcal{M}_1}_{A,i} - \hat\theta^{\mathcal{M}_2}_{A,i}\bigr)
      \,\Big|\, S
    \right]
  \right|
  \leq
  \sqrt{
    \mathbb{E}\!\left[(\hat\theta^w_{B,i}-\theta_i)^2 \,\Big|\, S\right]
    \cdot
    \mathbb{E}\!\left[
      \bigl(\hat\theta^{\mathcal{M}_1}_{A,i} 
            - \hat\theta^{\mathcal{M}_2}_{A,i}\bigr)^2
      \,\Big|\, S
    \right]
  }.
\end{equation*}
Take the outer expectation over $S$ and apply the Jensen's inequality to that expectation:

\begin{equation*}
  \mathbb{E}\!\left[
    \sqrt{
      \mathbb{E}[(\hat\theta^w_{B,i}-\theta_i)^2\mid S]
      \cdot
      \mathbb{E}[(\hat\theta^{\mathcal{M}_1}_{A,i}-\hat\theta^{\mathcal{M}_2}_{A,i})^2\mid S]
    }
  \right]
  \leq
  \sqrt{
    \mathbb{E}\!\left[\mathbb{E}[(\hat\theta^w_{B,i}-\theta_i)^2\mid S]\right]
    \cdot
    \mathbb{E}\!\left[
      \mathbb{E}[(\hat\theta^{\mathcal{M}_1}_{A,i}-\hat\theta^{\mathcal{M}_2}_{A,i})^2\mid S]
    \right]
  }
\end{equation*}
\end{proof}

For the plug-in estimator, the first factor in Equation~\eqref{eq:naive_bound_ineq} is $\mathbb{E}[(\hat\theta^w_{B,i} 
- \theta_i)^2] = \mathbb{E}[v_i] + \mathbb{E}[b_i^2]$. Since the H{\'a}jek estimator is approximately design-unbiased, $\mathbb{E}[b_i^2]$ is negligible relative to $\mathbb{E}[v_i]$, and 
$K^{-1}\sum_{k=1}^K\widehat{\mathrm{var}}(\hat\theta^w_{B_k,i})$ is an estimator of $\mathbb{E}[v_i]$.
The second factor $\mathbb{E}[(\hat\theta^{\mathcal{M}_1}_{A,i} - \hat\theta^{\mathcal{M}_2}_{A,i})^2]$ is the expected 
squared model difference. The fold average $K^{-1}\sum_k(\hat\theta^{\mathcal{M}_1}_{A_k,i} 
- \hat\theta^{\mathcal{M}_2}_{A_k,i})^2$ estimates this directly. Substituting both gives $\hat{t}_{\text{naive},i}$ in Equation~\eqref{eq:thatnaive}. The aggregation steps are the same as it for adjusted scores. Then in the full procedure, we compare the difference in naive scores to the error bound. If it exceeds the bound, we report the comparison based on the naive score.

\begin{figure}[!ht]
    \centering
    \includegraphics[width=.8\textwidth]{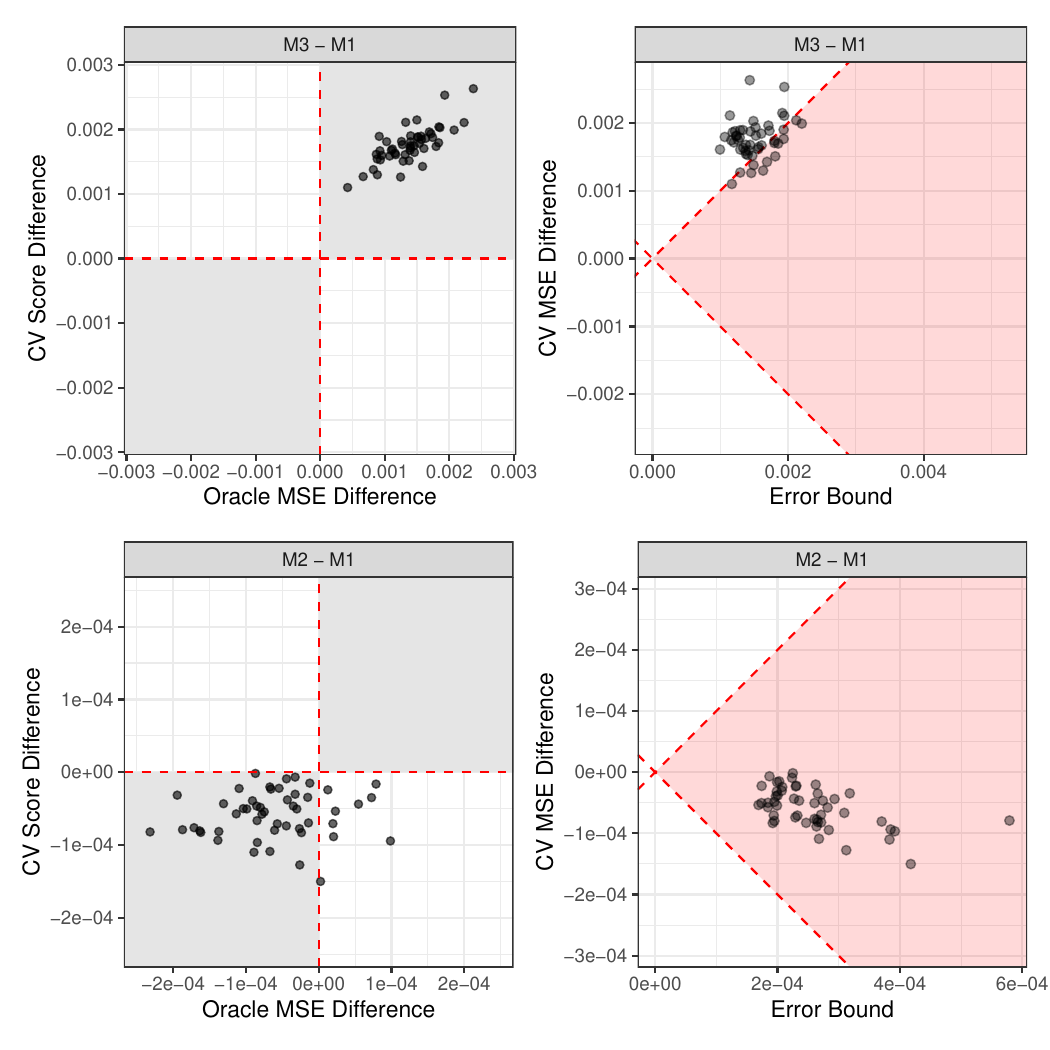}
 \caption{Results for province-level comparison under CV-SSU. \textit{Left:} Naive CV score differences versus oracle full-sample MSE differences. The $x$-axis shows the oracle full-sample MSE difference for $\mathcal{M}_3 - \mathcal{M}_1$ (top) and $\mathcal{M}_2 - \mathcal{M}_1$ (bottom). Each point represents one synthetic survey replicate. Points in the first and third quadrants indicate that the CV score difference has the same sign as the oracle full-sample MSE difference and thus correct ranking. \textit{Right:} Naive CV score differences versus the error bound for naive CV scores. The red shaded region is the inconclusive region where $ |\widehat{\mbox{score}}_q^{\MM a}-\widehat{\mbox{score}}_q^{\MM b}|< t_q$, i.e., score differences are too small to support decisive order of the two models.}
    \label{fig:S_LITbd51}
\end{figure}


Comparing to the error bound on adjusted score, $\hat{t}_{\text{naive},i}$ is more conservative. Figure~\ref{fig:S_LITbd51} illustrates the consequences. Here we used the moderate sample size setting as described in Section~\ref{sec:sim}. For the comparison between $\MM_3$ and $\MM_1$ , the naive score differences are positive in nearly all replicates under naive scores, correctly signing the oracle difference, but the bound $\hat{t}_{\text{naive},i}$ is relatively large to place some of replicates in the inconclusive region. For the comparison between $\mathcal{M}_2$ and $\mathcal{M}_1$, the model differences are again very small in scale and within the inconclusive region of naive score error bound.

\clearpage
\section{Cross-validation by PSU}
\label{sec:supp_cluster}

In this section, we consider an alternative sample splitting strategy for multi-stage stratified sampling design. Instead of randomly assigning SSUs within each PSU into $K$ folds, we randomly assign PSUs into $K$ folds. We refer to this strategy as CV-PSU, and the version presented in the main paper as CV-SSU.   

In principle, both CV strategies maintain the structure of the survey design after sample splitting. Figure \ref{fig:S_LIT-3050bd_diff_nntl_cluster} presents results for CV-PSU under the setting of moderate sample size (50 clusters per stratum, and 30 households per cluster). CV-PSU leads to the same conclusion as CV-SSU: $\MM_1$ and $\MM_2$ cannot be ranked conclusively, whereas the score difference between $\MM_1$ and $\MM_3$ is sufficiently large to conclude that $\MM_1$ outperforms $\MM_3$.


\begin{figure}[!ht]
    \centering
    \includegraphics[width=.8\textwidth]{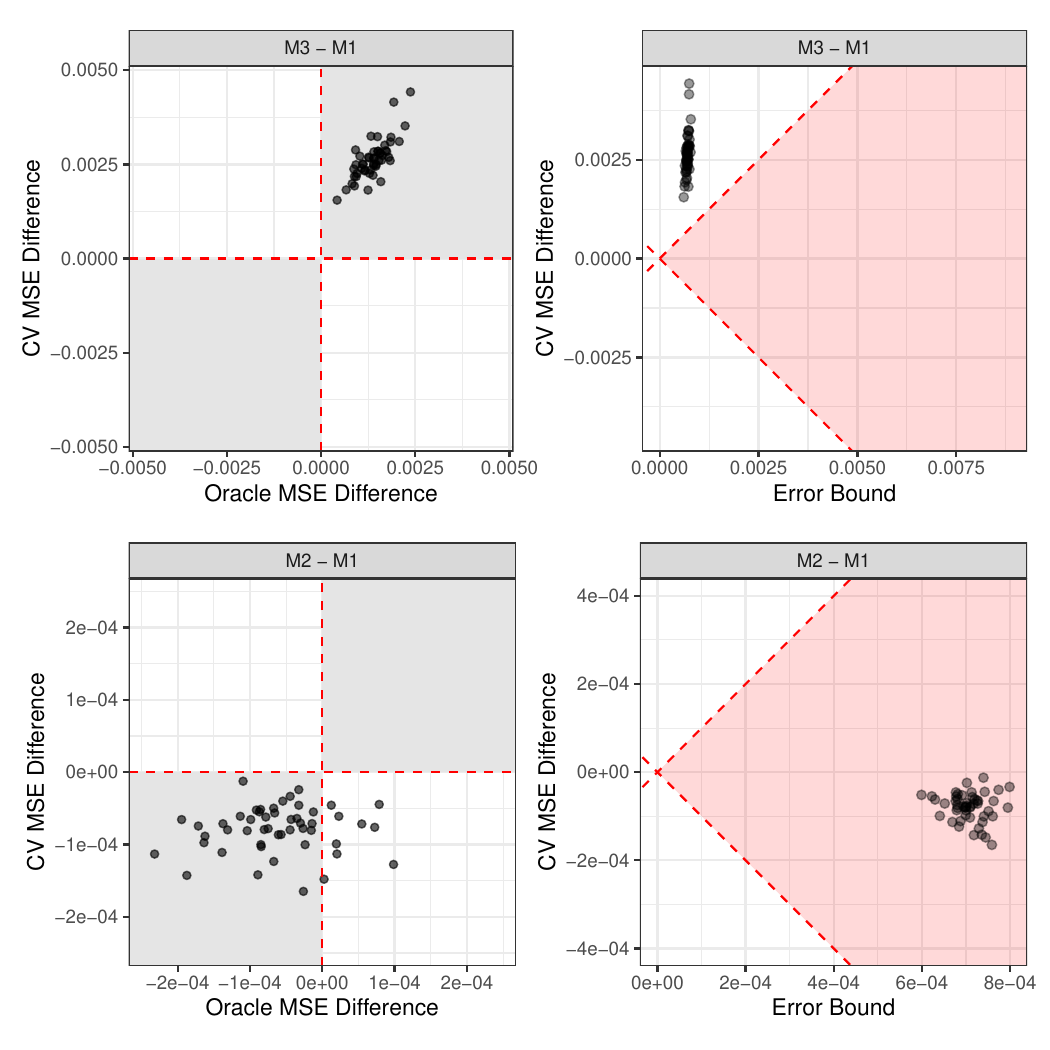}
 \caption{ CV-PSU: Simulation results under 50 clusters per stratum, 30 households per cluster,
\textit{Left:} Adjusted CV score differences versus oracle MSE differences.
The $x$-axis shows the oracle MSE difference $\mathcal{M}_3 - \mathcal{M}_1$ (top row) and $\mathcal{M}_2 - \mathcal{M}_1$ (bottom row); the $y$-axis shows the corresponding adjusted CV score differences. Each point represents one synthetic survey replicate. Points in the first and third quadrants indicate that the CV score difference has the same sign as the oracle MSE difference, implying correct ranking of the two models. \textit{Right:} Adjusted CV score differences versus the error bound. The red shaded region, bounded by the lines $y = x$ and $y = -x$, is the inconclusive zone where $ |\widehat{\mbox{score}}_q^{\MM a}-\widehat{\mbox{score}}_q^{\MM b}|< t_q$; score differences are too small to support decisive order of the two models.}
    \label{fig:S_LIT-3050bd_diff_nntl_cluster}
\end{figure}

Figure~\ref{fig:S_LIT-3050OR_comparison} highlights a key difference between CV-PSU and CV-SSU. Under $5$-fold cross-validation, removing one fifth of the clusters from a model causes greater information loss than removing one fifth of households within
each cluster. Thus for CV-PSU, the gap between the oracle full-sample MSE and oracle training MSE increases faster than CV-SSU as sample size decreases.
This can also be seen from Figure~\ref{fig:S_LIT-3050score_scatter_national_cluster} and \ref{fig:S_LIT-3040score_scatter_national_cluster}, where it shows that while the adjusted CV scores align relatively well with the training set oracle MSE, they could deviate from the oracle full-sample MSE when sample size is small. Therefore, in addition to the effect of the unidentifiable remainder term in estimating the training set oracle MSE, the performance of CV scores can also be impacted significantly by the gap between the two oracle MSE estimands. In contrast, the impact of reduced sample size on such gap is smaller for CV-SSU, as shown in Figure~\ref{fig:S_LIT-3040score_scatter_national}.

In summary, the oracle training MSE is a substantially poorer proxy for the full-sample oracle under CV-PSU than under CV-SSU, which can lead to more bias in model ranking.


\begin{figure}[!ht]
    \centering
    \begin{subfigure}{0.8\textwidth}
        \centering
        \includegraphics[width=\textwidth]{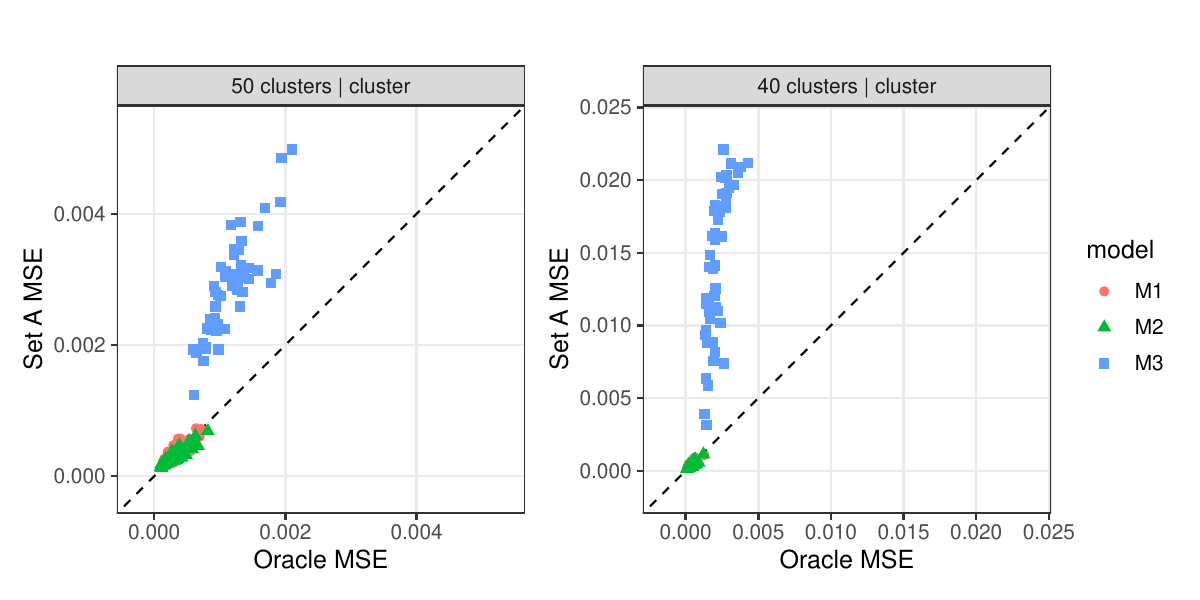}
        \caption{CV-PSU}
    \end{subfigure}
    
    \vspace{0.5em}
    
    \begin{subfigure}{0.8\textwidth}
        \centering
        \includegraphics[width=\textwidth]{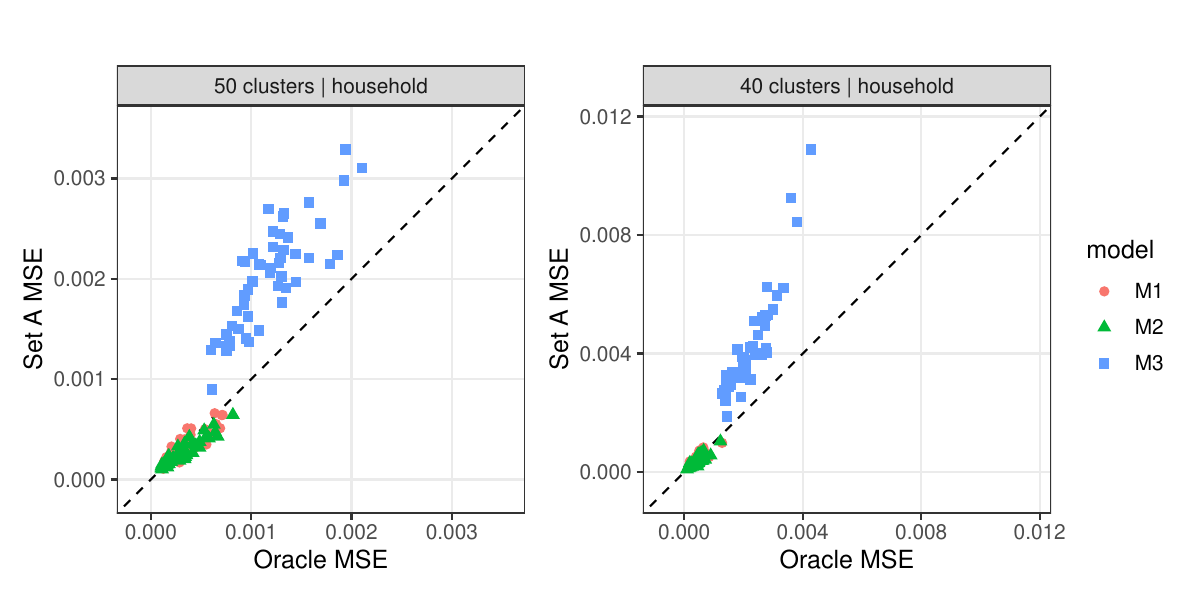}
        \caption{CV-SSU}
    \end{subfigure}
    
    \caption{Comparison of full-sample oracle MSE and oracle training MSE under two cross-validation schemes. Left: 50 clusters per stratum, each with 30 households. Right: 40 clusters per stratum, each with 30 households. Top row: CV-PSU. Bottom row: CV-SSU.}
    \label{fig:S_LIT-3050OR_comparison}
\end{figure}

\begin{figure}[!ht]
    \centering
    \includegraphics[width=.8\textwidth]{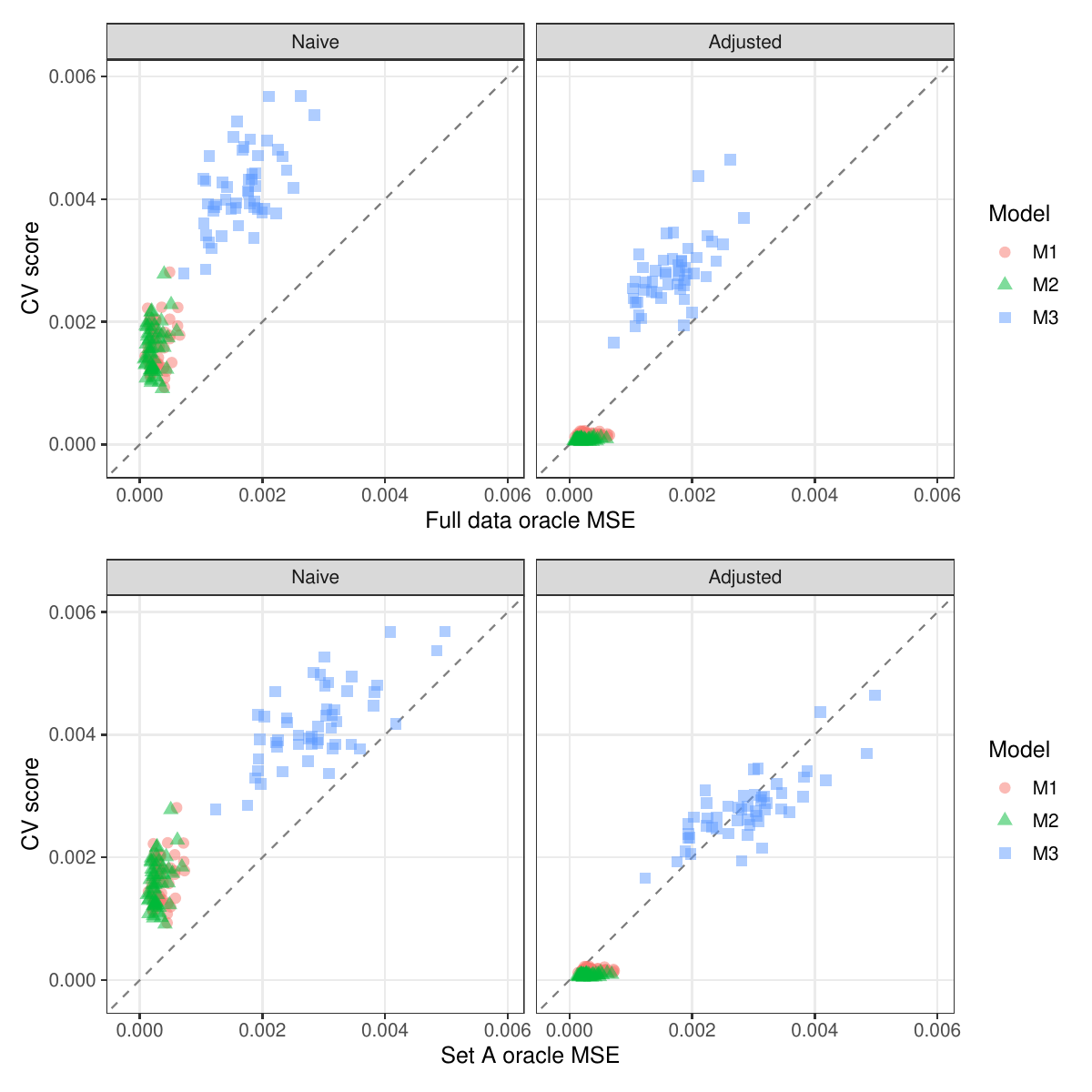}
 \caption{Naive (left) and adjusted (right) CV scores against the oracle MSE under CV-PSU, with 50 clusters per stratum and 30 households per cluster. The top row uses the oracle full-sample MSE as reference; the bottom row uses the oracle training MSE. Each point represents one simulation replicate.}
    \label{fig:S_LIT-3050score_scatter_national_cluster}
\end{figure}

\begin{figure}[!ht]
    \centering
    \includegraphics[width=.8\textwidth]{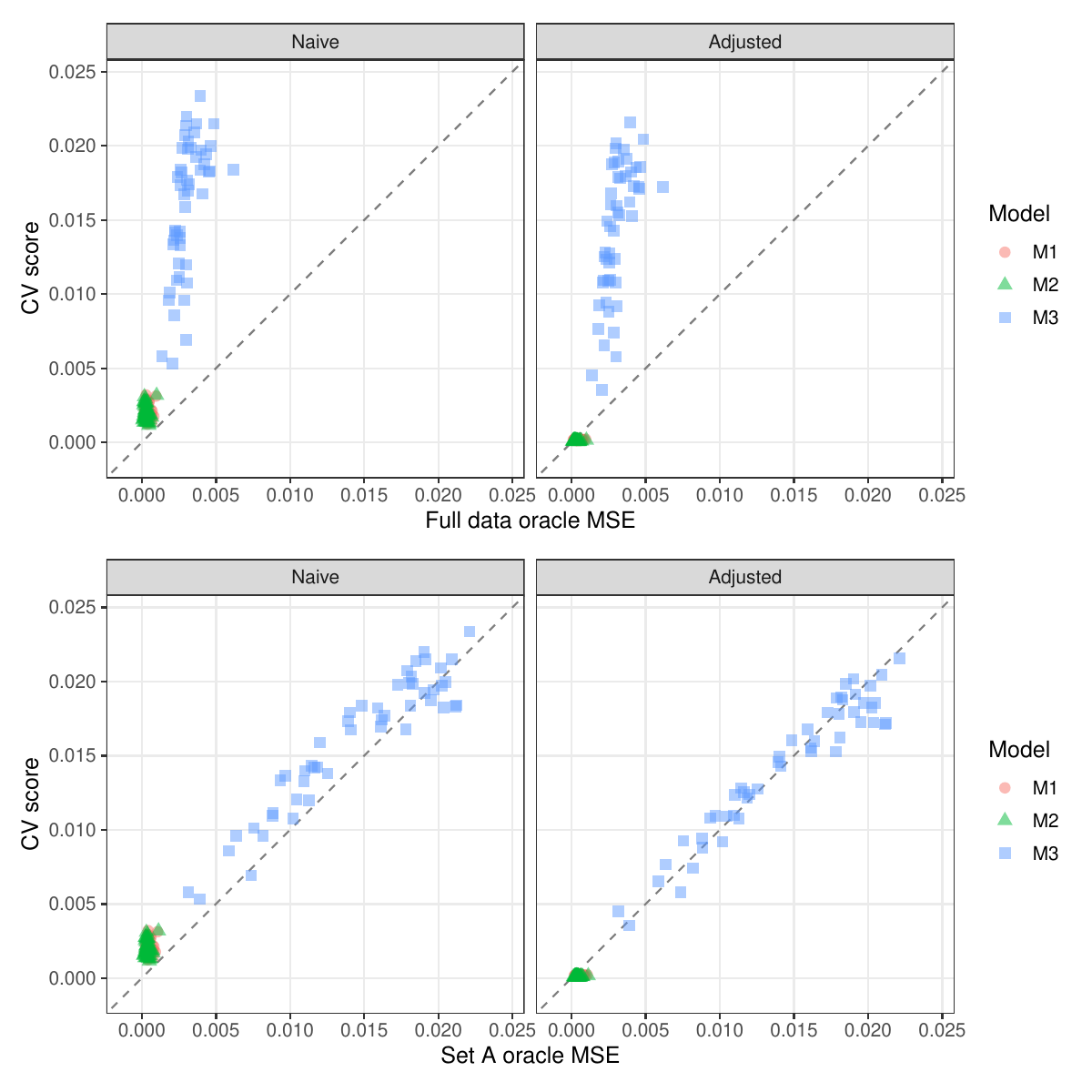}
 \caption{Naive (left) and adjusted (right) CV scores against oracle MSE under CV-PSU, with 40 clusters and 30 households per cluster. The top row uses the oracle full-sample MSE as reference; the bottom row uses the oracle training MSE. Each point represents one simulation replicate.}
    \label{fig:S_LIT-3040score_scatter_national_cluster}
\end{figure}

\begin{figure}[!ht]
    \centering
    \includegraphics[width=.8\textwidth]{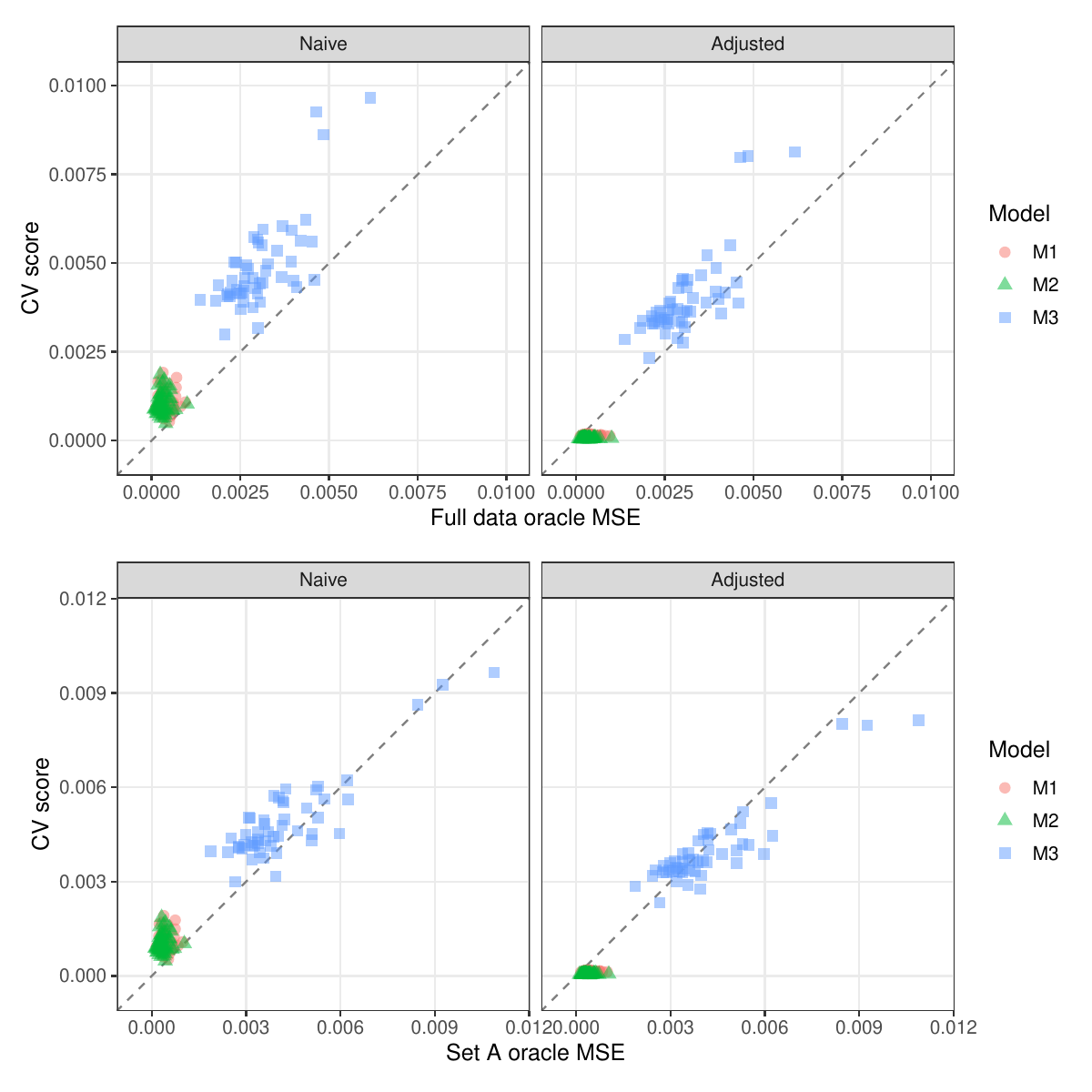}
 \caption{Naive (left) and adjusted (right) CV scores against oracle MSE under CV-SSU, with 40 clusters and 30 households per cluster. The top row uses the oracle full-sample MSE as reference; the bottom row uses the oracle training MSE. Each point represents one simulation replicate.}
    \label{fig:S_LIT-3040score_scatter_national}
\end{figure}

%

\clearpage

\section{Two-fold CV-SSU with resplitting}
\label{sec:supp_2f}

In this section, we discuss an alternative to $5$-fold CV presented in the main paper. Under two-fold CV, the sample $S$ is randomly partitioned into two equal halves, with one serving as the training set $A$ and the other as the validation set $B$. The partition can be repeated multiple times to reduce sensitivity to the choice
of split. Because each training set retains only half the data, the dependence between $\hat\theta^{\mathcal{M}}_{A,i}$ and $\hat\theta^w_{B,i}$ is weaker than under $K$-fold CV, where the training set retains $(K-1)/K$ of the sample.
Consequently, the covariance term $\hat{c}^{\mathcal{M}}_i$ in Theorem~\ref{thm2} is smaller in magnitude, and the naive and adjusted scores are closer to each other than under 5-fold CV. The terms $\hat{v}_i$, $\hat{c}^{\mathcal{M}}_i$, and the naive squared errors are averaged over the repeated pairs before computing the adjusted score.
\begin{figure}[!ht]
    \centering
    \begin{subfigure}[t]{\textwidth}
        \centering
        \includegraphics[width=1\textwidth]{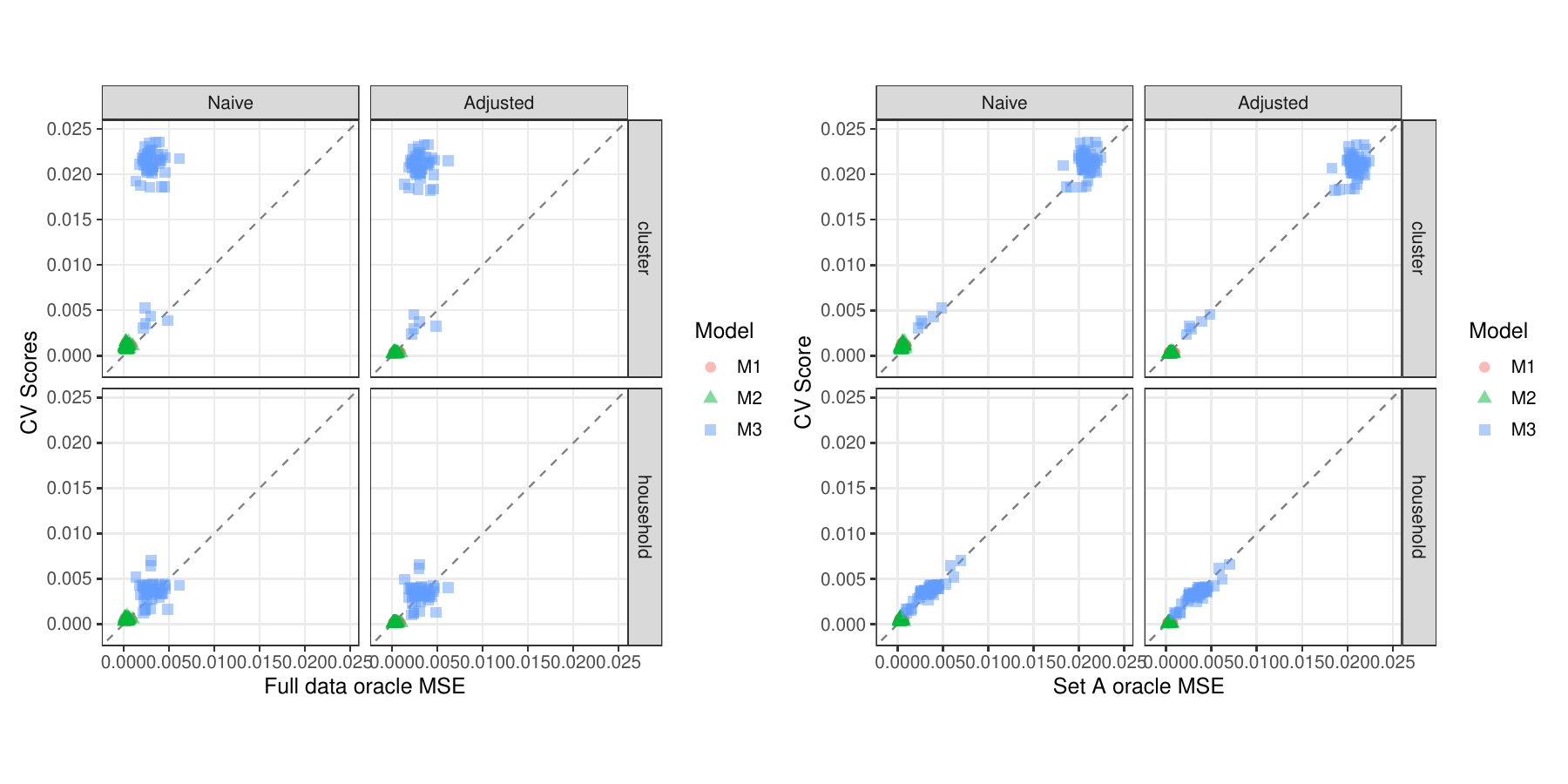}
        \caption{Moderate sample size: 50 clusters per stratum, 30 households per cluster.}
        \label{fig:S_LIT-3050score_scatter_national_2f}
    \end{subfigure}
    \vspace{1em}
    \begin{subfigure}[t]{\textwidth}
        \centering
        \includegraphics[width=1\textwidth]{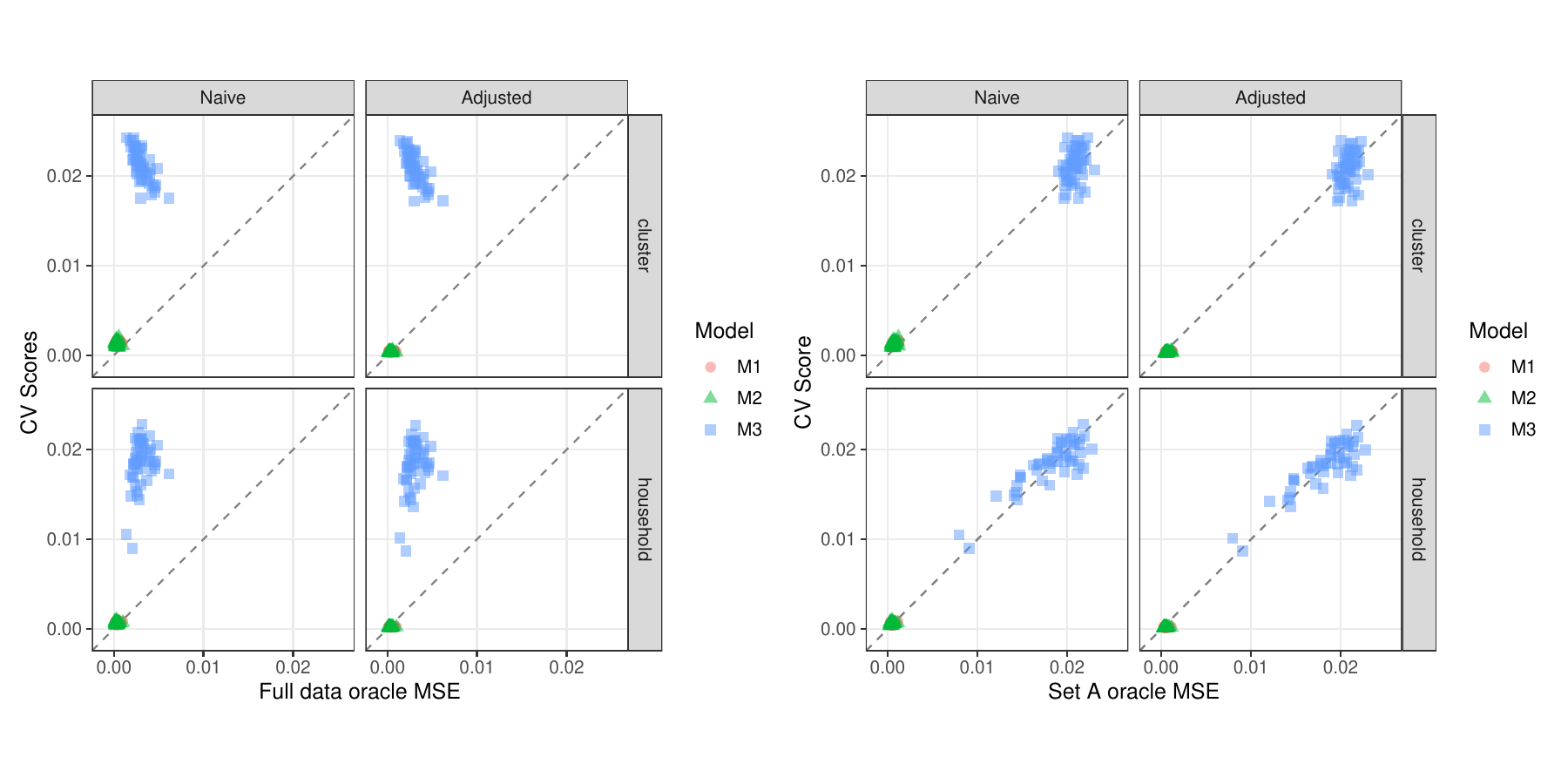}
        \caption{Small sample size: 40 clusters per stratum, 30 households per cluster.}
        \label{fig:S_LIT-3040score_scatter_national_2f}
    \end{subfigure}
    \caption{Scatter plots of 2-fold CV scores versus oracle MSE for the three models
$\MM_1$, $\MM_2$, and $\MM_3$, under CV-PSU (top row of each panel) and CV-SSU
(bottom row of each panel). Columns correspond to the naive and adjusted scores. The left
half of each panel uses the full-sample oracle MSE as the reference; the right half uses
the training oracle MSE. Each point represents one simulation replicate, and the dashed
$x = y$ line marks exact agreement between the CV score and the oracle MSE.}
    \label{fig:S_2f_combined}
\end{figure}

In this simulation, we repeat the partition five times and compare both CV-PSU and CV-SSU under the moderate-sample setting (50 clusters per stratum, 30 households per cluster) and the small-sample setting (40 clusters per stratum, 30
households per cluster). Under all settings, two-fold CV-SSU with resplitting correctly ranks the three models: $\MM_3$ is the least favored and $\MM_1$ and $\MM_2$ are very close. Additionally, both CV-PSU and CV-SSU estimate the training
oracle error accurately. 

The differences again lie in how closely the training oracle MSE tracks the full-data oracle in these scenarios. Under the moderate-sample setting, CV-SSU tracks the full-sample oracle more closely than CV-PSU, consistent with the
findings in Supplementary Section~\ref{sec:supp_cluster}. Under the small-sample setting (40 clusters), the oracle training MSE for $\MM_3$ departs from the full-sample oracle under both strategies, but the divergence is more severe for CV-PSU.
Taken together, two-fold CV-SSU is more reliable than two-fold CV-PSU for tracking the full-sample MSE, but less so than 5-fold CV-SSU at small sample sizes, because each training fold retains only half the data and the
resulting oracle training MSE estimand is a noisier proxy for the full-sample target.

\FloatBarrier


\section{LOAO analysis of the 2024 Zambia DHS}

\begin{figure}[!ht]
    \centering
    \includegraphics[width=1\textwidth]{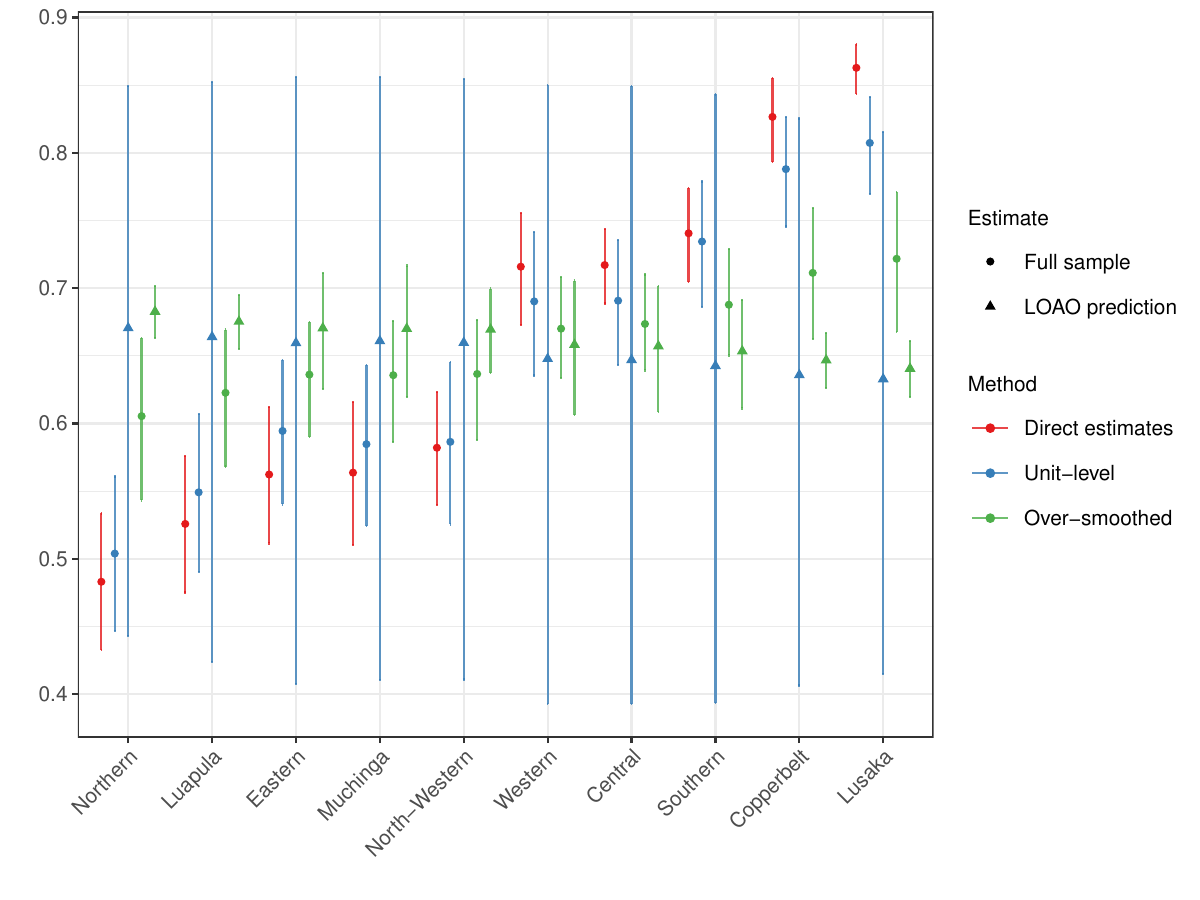}
  \caption{Leave-one-area-out validation at the province level under an IID random-effects specification. Points and 90\% credible intervals are shown for the unit-level model and the over-smoothed model under both full-data and LOAO fits, alongside design-based direct estimates with 90\% confidence intervals.}
  \label{fig:loao-admin1}
\end{figure}

\begin{figure}[!ht]
    \centering
    \includegraphics[width=1\textwidth]{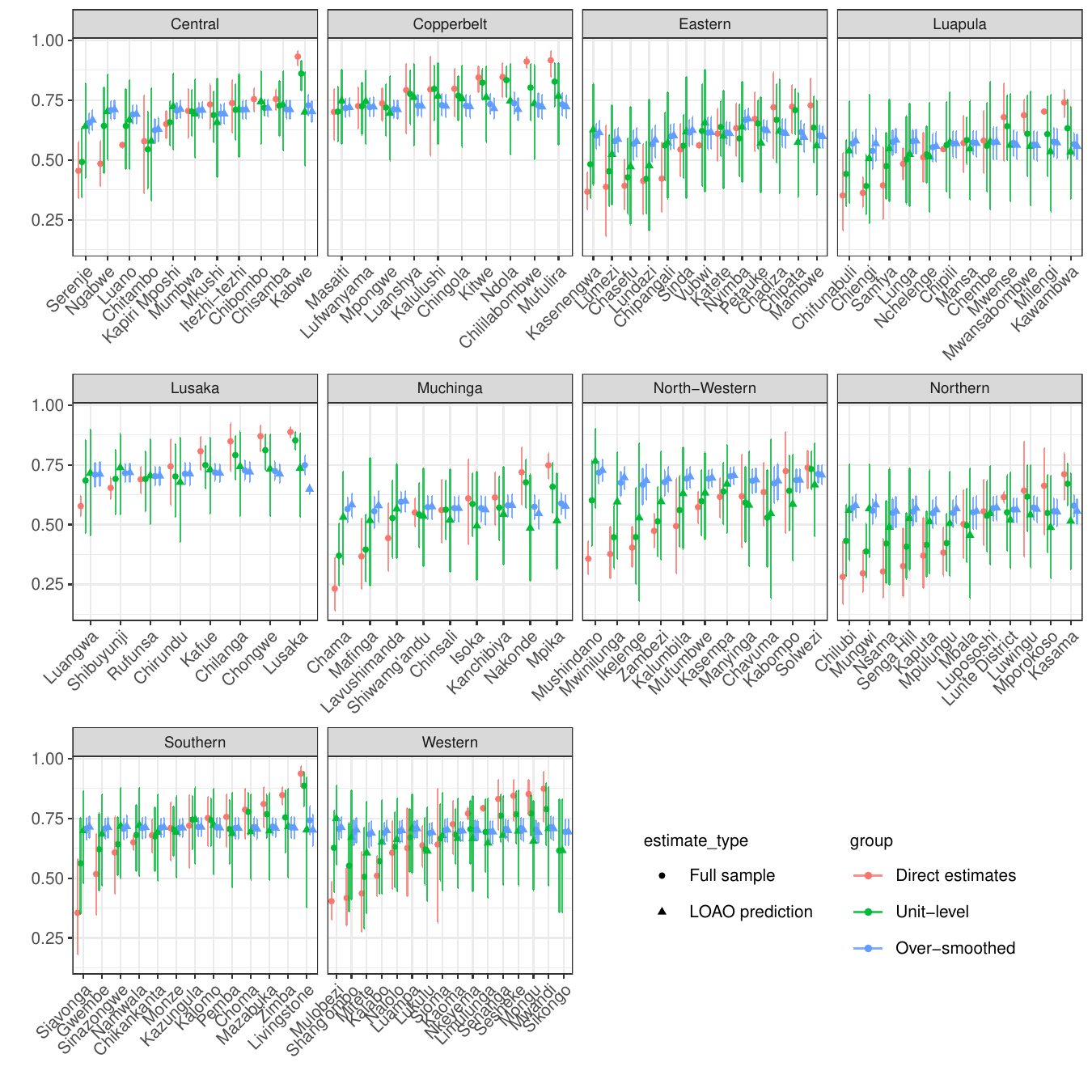}
  \caption{Leave-one-area-out validation at the Admin-2 level under a BYM2
spatial random-effects specification, with panels arranged by province. Points and 90\% credible intervals are shown for the unit-level model and the over-smoothed model under both full-data and LOAO fits, alongside the design-based direct estimates with 90\% confidence intervals.}
  \label{fig:loao-admin2}
\end{figure}

We apply LOAO to the 2024 Zambia DHS at two administrative levels for $\MM_1$ and $\MM_3$ described in the main paper. At the province level, the LOAO score is $0.0240$ for $\MM_3$ and $0.0233$ for $\MM_1$, and thus $\MM_1$ is slightly preferred by the LOAO cross-validation. At the district level, the LOAO score is $0.0179$ for $\MM_3$ and $0.0256$ for $\MM_1$, yielding the $\MM_3$ is preferred.

To gain more insight into the behavior of LOAO, Figures~\ref{fig:loao-admin1} and~\ref{fig:loao-admin2} compares the model estimates from the full sample and the LOAO predictions. The full-sample estimates $\hat\theta_i^{\MM}$ and the LOAO predicted estimates $\hat\theta_i^{(-i)}$ show dramatically different behavior at the province level with IID random effect in Figure~\ref{fig:loao-admin1}. This renders the MSE estimates based on the latter extremely biased. Such difference is less pronounced at the district level with spatially structured random effects in the model, as shown in Figure~\ref{fig:loao-admin2}. However, the correct model ranking at the district level is more driven by the severe over-smoothing of $\MM_3$. 

\section{Additional results for the analysis of the 2024 Zambia DHS}

Figure \ref{fig:zm_ad1ad2map_mapsdl} shows the standard deviation for the corresponding estimates in Figure \ref{fig:literacymap} of the main paper. Across both levels, the over-smoothed model $\MM3$ yields the smallest standard deviations, reflecting the underestimated uncertainty as an over-smoothed model. 

\begin{figure}[!ht]
    \centering
    \includegraphics[width=1\textwidth]{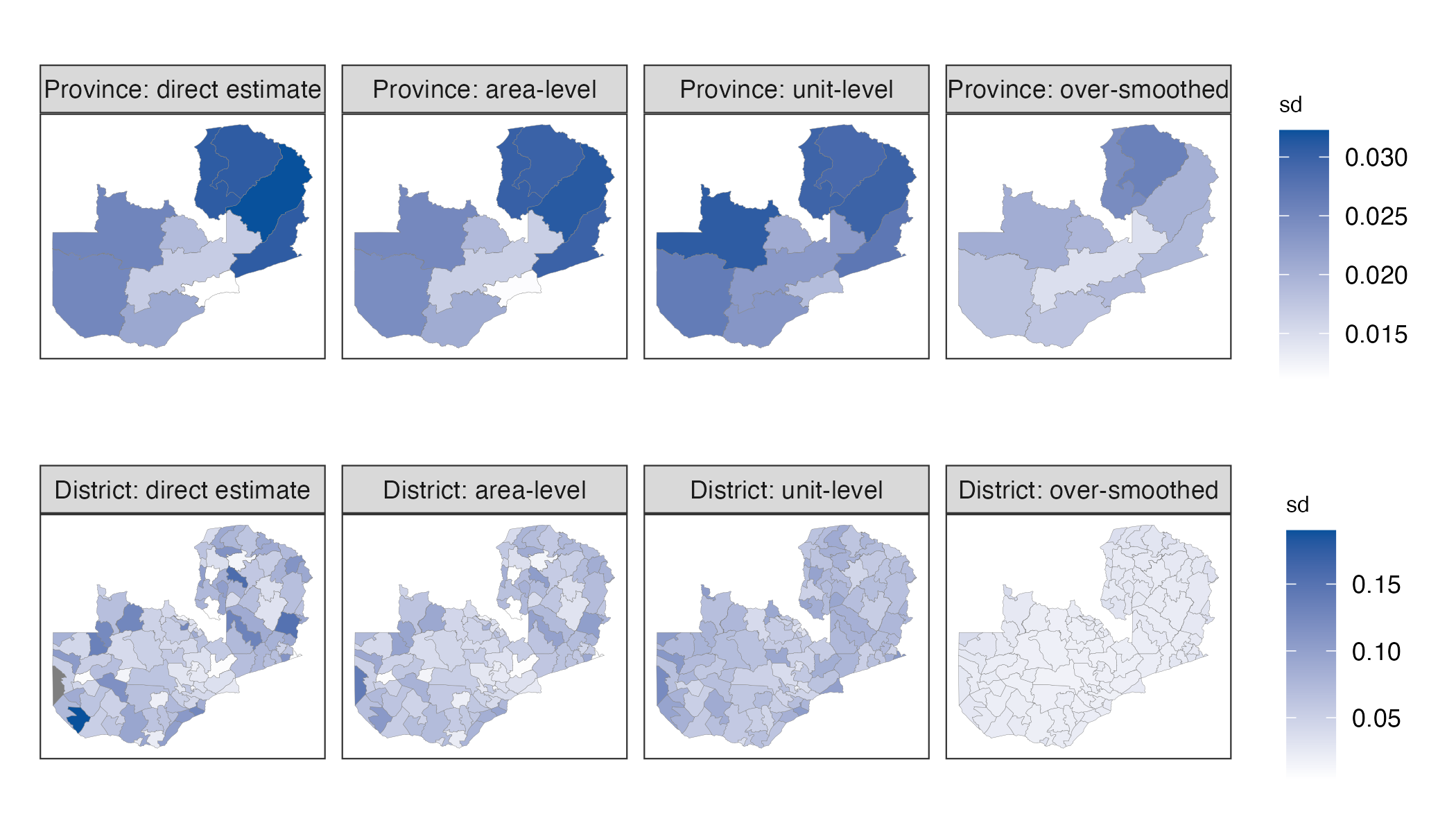}
    \caption{Province- and district-level standard deviation of the female literacy rate under direct estimation and the three candidate SAE models considered in Section \ref{sec:real-data}}
    \label{fig:zm_ad1ad2map_mapsdl}
\end{figure}

\end{document}